\def\figurename{Fig.}
\begin{document}
\title{Scale Congestion Control to Ultra-High Speed Ethernet}

%for single author (just remove % characters)

\numberofauthors{4}
\author{Wanchun Jiang, Fengyuan Ren, Xin Yue, Chuang Lin}

\maketitle

% Use the following at camera-ready time to suppress page numbers.
% Comment it out when you first submit the paper for review.
\thispagestyle{empty}
\makeatletter
\newcommand{\rmnum}[1]{\romannumeral #1}
\newcommand{\Rmnum}[1]{\expandafter\@slowromancap\romannumeral #1@}
\makeatother

\subsection*{Abstract}
Currently, Ethernet is broadly used in LAN, datacenter and enterprise networks, storage networks, high performance computing networks and so on. Along with the popularity of Ethernet comes the requirement of enhancing Ethernet with congestion control. On the other hand, Ethernet speed extends to $40Gbps$ and $100Gbps$ recently, and even $400Gbps$ in the near future. The ultra-high speed requires congestion control algorithms to adapt to the broad changes of bandwidth, and highlights the impacts of small delay by enlarging the bandwidth delay product. The state-of-art standard QCN is heuristically designed for the $1Gbps$ and $10Gbps$ Ethernet, and unaware of the challenges accompanying the ultra-high speed. 

To scale congestion control to ultra-high speed Ethernet, we propose the Adaptive Sliding Mode (ASM) congestion control algorithm, which is simple, stable, has fast and smooth convergence process, can tolerate the impacts of delay and adapt to the wide changes of bandwidth. Real experiments and simulations confirm these good properties and show that ASM outperforms QCN. Designing ASM, we find that the derivative of queue length is helpful to rate adjustment because it reflects the difference between bandwidth and aggregated sending rate. We also argue for enforcing congestion control system staying at the congestion boundary line, along which it automatically slides to stable point. These insights are also valuable to develop other congestion control algorithms in ultra-high speed networks. 

%\textbf{category should be consistent with the submitted session}
\category{C.2.2}{Computer Systems Organization}{Computer Communication Networks}[Network Protocols] 
%\category{C.2.4}{Computer Systems Organization}{Computer Communication Networks}[Distributed Systems]
%\category{C.4}{Computer Systems Organization}{Performance of Systems}
\terms{Design, Algorithms, Standardization}
\keywords{Ultra-high Speed Ethernet, Congestion Control, Sliding Mode Motion, Derivative of Queue Length}

\section{Introduction}

%high speed and popularity.
Up to now, Ethernet has been very popular in LAN, datacenter and enterprise networks, storage networks, High Performance Computing (HPC) networks, carrier and service provider networks and so on. With the popularity of big date, multiple cores servers and wireless devices, higher speed is required by applications such as huge scientific data transfer, and at positions such as the aggregation and core layer of datacenter and enterprise networks, the aggregation of the wireless access in carrier and service provider network~\cite{roadmap}. The Moore's law also predicts that servers need $100Gbps$ I/O bandwidth in 2017~\cite{commercial}. To solve this problem, the standards of $40Gbps$ and $100Gbps$ Ethernet have been ratified recently~\cite{100G}, and the $400Gbps$ Ethernet is under development~\cite{400G}. Currently, ESnet5 and Internet2 have deployed $100Gbps$ network for scientific experiments~\cite{ESnet5}, companies such as Cisco, Brocade, Extreme and Huawei become suppliers of $100Gbps$ Ethernet~\cite{extreme}, and the $100Gbps$ Ethernet is expected to be available for commercial use in the near future~\cite{commercial}. The ultra-high speed would in turn expand the popularity of Ethernet.

%Why L2CM
%Congestion Management standards
Along with the popularity, however, also come new requirements making congestion management indispensable to ultra-high speed Ethernet. Congestion management is the cornerstone for Ethernet to satisfy the low latency requirement of HPC traffic and the lossless requirement of block-based storage traffic. Moreover, Ethernet congestion management would be helpful to improve the performance of upper layer protocols such as iSCSI~\cite{iSCSI} and iWARP~\cite{iWARP}. In addition, congestion management also takes an important role in enhancing Ethernet as the unified switch fabric of data center networks $\cite{DCB}$. Nowadays, Ethernet congestion management has been standardized by IEEE 802.1Qbb~\cite{802.1Qbb} and IEEE 802.1Qau~\cite{802.1Qau}. Specifically, IEEE 802.1Qbb develops the Priority-based Flow Control (PFC) mechanism, which pause the incoming traffic within the same priority to prevent dropping packets caused by transient congestion. To eliminate the long-lived congestion, IEEE 802.1Qau defines the end-to-end congestion control mechanism composed by a rate-based and queue-based framework, and a standardized congestion control algorithm named Quantized Congestion Notification (QCN). Both PFC and QCN have been supported in commercial devices such as FocalPoint FM6000 series~\cite{FM6000} and Juniper QFX3500~\cite{QFX}.

%Problems and how QCN faces these problems
In this paper, we focus on scaling congestion control algorithm to ultra-high speed Ethernet. The impacts of ultra-high speed on the performance of congestion control are twofold: 1) The available bandwidth can changes in a wide range. Consequently, congestion control algorithms should be elaborately designed to adapt to the broad changes of bandwidth. 2) The ultra-high speed will enlarge the Bandwidth Delay Product (BDP) and correspondingly highlight the impacts of delay even if it's small. Although the good performance of QCN under the $1Gbps$ and $10Gbps$ Ethernet has been shown with simulations, experiments and theoretically analysis ~\cite{sim2,qcn-netfpga, average}, QCN is unaware of these challenges and accordingly unable to scale to ultra-high speed Ethernet (see \S2.3).

%Similar to BIC-TCP~\cite{BIC}, the state-of-art standard QCN employs multiple decrease and binary search algorithm for rate increase. The same as BIC-TCP, its rate increase part would be too aggressive for small available bandwidth~\cite{CUBIC}. Moreover, the Active Increase and Hyper-Active Increase, which are complement methods of QCN when the binary search fails, can not adapt to the changes of bandwidth, because they contains fixed units in rate adjustment (\S2). Furthermore, the largest delay for the stability of QCN degrades to $27\mu s$ when the speed of Ethernet is $100Gbps$, as we theoretically shown in Appendix A. Thus, QCN may become unstable due to the delay with ultra-high speed Ethernet. In total,

%what we do
We stipulate that the congestion control algorithm for ultra-high speed Ethernet should be simple and stable, converge to stable state quickly  and smoothly, tolerate the impacts of delay and adapt to the broad changes of bandwidth. The Adaptive Sliding Mode (ASM) congestion control algorithm, which is proposed to replace QCN for ultra-high speed Ethernet in this paper, achieves these goals by enforcing congestion control system staying at the congestion boundary line, regardless of Ethernet speed, with the help of the derivative of queue length. Consequently, the congestion control system slides automatically along this boundary line to the stable point, being insensitive to the change of parameters of rate adjustment rules and  delays. In other words, ASM can tolerant the impacts of delay and its convergence process is smooth. Moreover, the sliding process and the approaching process are separated in ASM, and the approaching process can be accelerated independently for short response time. 
% to optimize the performance of congestion control schemes. 

We demonstrate a hardware implementation of ASM on the NetFPGA platform~\cite{netfpga}, and use it to verify our simulator on the NS2 platform. Experiments and simulations confirm the theoretical good properties of ASM and show that ASM outperforms QCN, especially with ultra-high speed links.

The key contributions in this paper are not only the ASM algorithm itself, but also the insights on scaling congestion control algorithms to  ultra-high speed networks. We focus on the trajectory of congestion control algorithm on the plane of queue length and aggregated sending rate. We also emphasize the value of the derivative of queue length in determining the direction of trajectory, because it directly reflects the difference between link capacity and aggregated sending rate. Furthermore, we argue for enforcing the trajectory moving straightforwardly to the stable point regardless of link speed, e.g., along the congestion boundary line which has a fixed angle to the efficiency line. In contrast, classic algorithms such as AIMD~\cite{AIMD} and GAIMD~\cite{GAIMD} oscillate up and down around the efficiency line. And adjusting sending rate in proportion to the extent of congestion makes the trajectory cycling around the state point on the plane of queue length and aggregated sending rate.

In the rest paper, we discuss congestion control of ultra-high speed Ethernet in \S2 and present the origin of designing ASM in \S3. We then show the ASM algorithm in detail and theoretically analyze its performance in \S4, and evaluate ASM in \S5. Finally, we discuss related work in \S6 and conclude in \S7.

\section{Congestion Control of Ultra-high Speed Ethernet}
We firstly give a brief introduction about the framework of Ethernet congestion control, then show the challenges of designing congestion control algorithm for ultra-high speed Ethernet, and finally discuss whether QCN can address these challenges or not.

\begin{figure}
\centering
\includegraphics[width=2.8in]{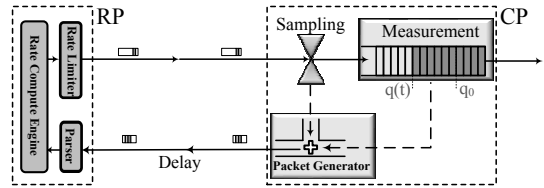}
\caption{Framework defined by IEEE 802.1Qau}
\label{structure}
\end{figure}

\subsection{Framework}
IEEE 802.1Qau has been developing the congestion control mechanism for Ethernet since 2006 $\cite{802.1Qau}$. With the lessons learned from designing other congestion control mechanisms in history, and taking the special environment of link layer into consideration, the researchers developed a rate-based framework, together with several congestion control algorithms, including the final standard QCN. 

Briefly speaking, the framework is composed by two parts, as shown in $\figurename{\ref{structure}}$. One is called \textit{Congestion Point (CP)}, which refers to congested switches. The responsibility of CP is to detect congestion, sample incoming packets, and generate and send feedback packets. At CP, the extent of congestion is represented by information of queue length.  The other one is called \textit{Reaction Point (RP)}, which refers to sources. RP takes charge of parsing feedback packets, and correspondingly adjusting its sending rate according to the feedback state information. In general, the rate adjustment is implemented by the rate limiters at Network Interface Cards (NICs) or edge switches.

Generally speaking, the primary goal of the congestion control algorithms for Ethernet is to tightly hold the instantaneous queue length $q(t)$ around the predefined target point $q_0$. The target queue length $q_0$ is elaborately chosen to avoid overflowing or draining of the buffer, which are corresponding to packets loss and link underutilization, respectively. Ideally, if $q(t)-q_0$ is always small, then the $q_0$ can be set small to reduce the queuing delay and reserve buffer room for burst traffic without degrading link utilization.

\subsection{Challenges}
To vividly illustrate the challenges in ultra-high speed Ethernet, we firstly present a graphic view of congestion control algorithms. Obviously, the core of congestion control algorithm for Ethernet is the interaction between the queue length at CP and the sending rate at RP. Thus, we put the queue length $q(t)$ and the aggregated sending rate $A(t)$ together. At any given time $t$, the state of the congestion control system can be represented by point $(q(t), A(t))$ in the plane of queue length and aggregated sending rate, which is called Q/R plane for convenience of subsequent discussion. Connecting all state points with arrows showing the direction that time $t$ increases, we can obtain the trajectory describing the dynamic behaviors of congestion control system. The red trajectory in $\figurename{\ref{switch-line}}$ is an example and the corresponding dynamic queue variation is illustrated below.

\subsubsection{Congestion Detection}
In the framework defined by IEEE 802.1Qau, the queue length is chosen as the congestion indicator. However, using queue length as the only congestion indicator is one-sided because the extent of congestion is also related with the aggregated sending rate $A(t)$. More intuitively, lines $q(t)=q_0$ and $A(t)=C$ (link capacity) divide the Q/R plane into four regions, but we can only judge that region \Rmnum{1} in $\figurename{\ref{switch-line}}$ is underloaded and region \Rmnum{3} is congested. It is hard to determine whether the rest regions are congested or not.

In history, either line $q(t)=q_0$ is used as the boundary to judge congestion or not in Active Queue Management (AQM) schemes or line $A(t)=C$ is considered as the efficiency line of the AIMD regulation algorithm. In the high speed Ethernet, these simple rules for congestion detection will face challenges since the regions except for \Rmnum{1} and \Rmnum{3} can't be roughly regarded as congested or not. 

%In general, the congestion control system is considered to be congested when $q(t)>q_0$, and vice versa. This is because $q(t)>q_0$ always indicates that the aggregated sending rate $A(t)$ is large than the bandwidth $C$, and vice versa. However, this indication is not exactly, and the states  and $[q(t)<q_0, A(t)>C]$ will occur frequently with the rate adjustment. In other word, the trajectory the congestion control system would frequently pass regions \Rmnum{2}, \Rmnum{4}, \Rmnum{5} and \Rmnum{6}. In these regions, i

\subsubsection{Rate Adjustment}
The direction of trajectory in the Q/R plane has the following attributes. Since 
\begin{equation}
\frac{dq(t)}{dt}=A(t)-C
\label{queue}
\end{equation}
the trajectory of congestion control system can only point to right above the line $A(t)=C$, and point to left when $A(t)<C$. 

Constrained by this law, the rate adjustment rules decide the concrete direction of trajectory. More specifically, the slope $k$ of the tangent of trajectory at any point $(q(t), A(t))$ satisfies 
\begin{equation}
k\triangleq\frac{\Delta A(t)}{\Delta q(t)}\triangleq\lim\limits_{h\rightarrow 0}\frac{A(t+h)-A(t)}{q(t+h)-q(t)}=\frac{\dot{A}(t)}{A(t)-C}
\label{slop}
\end{equation}
as shown in $\figurename{\ref{switch-line}}$. Actually, $\dot{A}(t)$ tells how fast RP adjusts its sending rate.

Eq. (\ref{slop}) implies that rate adjustment rules must associate with bandwidth and aggregated sending rate, or else the slope $k$ varies with the changes of bandwidth. For example, when the additive increase mechanism is employed and the link capacity is huge, $\dot{A}(t)$ is constant in the underloaded region \Rmnum{1}, and accordingly $k\rightarrow0$ referring to Eq. (\ref{slop}). It means that trajectory would move towards line $q(t)=0$, which implies empty buffer and even link underutilization, in the underloaded region \Rmnum{1}. This explain why TCP suffers from serious link underutilization in high speed networks~\cite{HS-TCP}. To maintain desired slop $k$ such that trajectory is able to reach the stable point, rate adjustment rules or $\dot{A}(t)$ should adapt to the changes of bandwidth. As the available bandwidth is unknown and can change in a large range, designing rate adjustment rules for ultra-high speed Ethernet becomes challenging.

\begin{figure}
\centering
\includegraphics[width=2.2in]{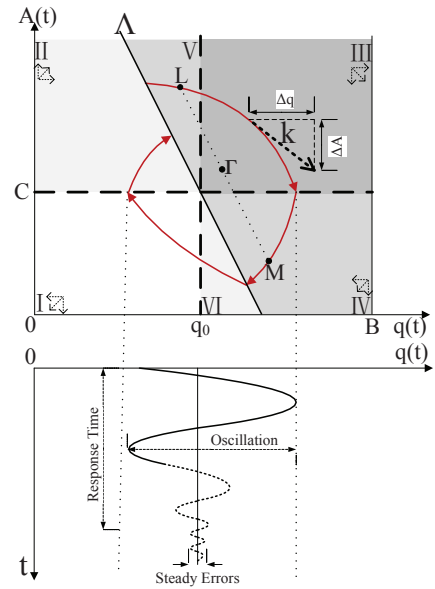}
\caption{Dynamic Behaviors of Congestion Control System in Q/R Plane}
\label{switch-line}
\end{figure}

\subsubsection{Delay}
Contradicted to what has been known that large delay will ruin the stability of congestion control systems, small delay is always ignored. However, ultra-high speed Ethernet will increase the BDP and accordingly highlight the impacts of small delay. For example, when the bandwidth is $100Gbps$ and the delay is $20\mu s$, the BDP is $2Mb$. In other words, the error $E$ of queue length observed by end hosts can be expressed as
\begin{equation}
E=q(t)-q(t-\tau)=\int^t_{t-\tau}{[A(v)-C]}dv\propto \tau*(A-C)
\label{E_1}
\end{equation}
where $\tau$ denote the delay, is in the order of megabits. This is an unavoidable error to all congestion control algorithms for ultra-high speed Ethernet. Consequently, tolerating such a large error is challenging in designing congestion control algorithms.

\textbf{Remark} There are also other errors, whose impacts are magnified by the ultra-high speed similar to delay, such as the measurement error of the queue length in hardware implementation and the absence of feedback information in the sampling interval. Obviously, the amplitudes of these errors are generally smaller than delay. Thus, we just consider delay in this work.

\subsection{Understanding of QCN} %Quantized Congestion Notification
The state-of-art algorithm QCN is heuristically designed for the $1Gbps$ and $10Gbps$ Ethernet. Although the congestion detection is improved in QCN, it is unaware of the rest challenges and accordingly can't scale to ultra-high speed Ethernet as we will discuss below. For the convenience of discussion, we would list the basic operations in QCN at first and more details are in $\cite{qcn-netfpga}$. 

\subsubsection{Basic Operations}
In QCN, whether the system is congested or not is judged according to the boundary line $F_b(t)=0$ (denoted by $\Lambda$) as shown in $\figurename{\ref{switch-line}}$, where 
\begin{equation}
F_b(t)=-[q(t)-q_0]-w*\Delta Q
\label{feedback}
\end{equation}
and 
\begin{equation}
\Delta Q= \Delta t* \frac{dq(t)}{dt}=\frac{1}{pC}[A(t)-C]
\label{deltaq}
\end{equation}
$\Delta Q$ denotes the variation of the queue length in the sampling interval, $p$ is the sampling probability and $w$ is a weight. QCN employs multiple decrease by utilizing the feedback information and binary search algorithm for rate increase, complemented with correction mechanisms including AI, HAI, Target Rate Reduction and Extra Fast Recovery.

Let $r$ denote the current sending rate and $R$ denote the target sending rate recorded before rate decrease.
\\
\textbf{Rate Decrease (RD)} On the arrival of feedback packet, RP assigns the value of $r$ to $R$ and then decreases $r$. 
\begin{equation}
\left \{ \begin{array}{l}
R\leftarrow r \\
r\leftarrow r(1-G_d |F_b| )
\end{array}
\right.
\label{decrease}
\end{equation}
where $G_d$ is a constant chosen for $G_d |F_{bmax}|=\frac{1}{2}$, i.e., the sending rate won't decrease more than $50\%$ in one adjustment.
\\
\textbf{Fast Recovery (FR)} After the \textbf{RD} phase, RP does binary search. More specifically, RP increases its sending rate $r$ every $T$ seconds 
\begin{equation} 
r\leftarrow \frac{1}{2}(r+R).
\label{fastrecovery}
\end{equation}
Parameter $T$ is acquiescently set to be the time of sending $150KB$ data, which is measured by hardware implemented counter or timer. This behavior will be repeat five times by default, if no feedback messages arrive. 
\\
%Except these two strategies, QCN may also do \textbf{Active Increase (AI)}, \textbf{Hyper-Active Increase (HAI)}, \textbf{Target Rate Reduction} and \textbf{Extra Fast Recovery}, depending on the congestion states.
\textbf{Active Increase (AI)}
After the five times repetition, the binary search is persisted but the target rate $R$ will be increased and the time interval of each cycle is halved. Namely
\begin{equation}
\label{xy} \left \{ \begin{array}{l}
R\leftarrow R+R_{AI} \\
r\leftarrow \frac{1}{2}(r+R)
\end{array}
\right.
\end{equation}
where $R_{AI}$ is a constant unit.
\\
\textbf{Hyper-Active Increase (HAI)}
By default, $R_{AI}$ is set to be $5Mbps$ for the $10Gbps$ Ethernet. But when both the counter and the timer indicate the QCN system has entered into the \textbf{AI} phase, $R_{AI}$ is increased to $50Mbps$. This is called Hyper-Active Increase.
\iffalse
\\
\textbf{Target Rate Reduction}
If $R$ is large than $r$ by 10 times, $R$ is decreased by a factor of 8. 
\\
\textbf{Extra Fast Recovery}
When RP receives consecutive feedback packets in one cycle, it just do RD once.
\fi

\subsubsection{Congestion Detection}
In the Q/R plane, except for the congested region \Rmnum{3} and the underloaded region \Rmnum{1}, the rest parts are divided into four regions by QCN using lines $\Lambda$, $q(t)=q_0$ and $A(t)=C$, as shown in $\figurename{\ref{switch-line}}$. In region \Rmnum{5}, $A(t)$ is greatly larger than $C$ even though $q(t)<q_0$. And in region \Rmnum{4}, $q(t)$ is greatly larger than $q_0$ even though $A(t)<C$. Therefore, both regions \Rmnum{4} and \Rmnum{5} are treated as congested. Similarly, the regions \Rmnum{2} and \Rmnum{6} are treated as underloaded. Consequently, line $\Lambda$ becomes the congestion boundary line. Note that to solve the challenge of congestion detection, QCN does not explicitly declare whether $A(t)$ or $q(t)$ dominates, but makes trade-off between queue offset and $\Delta Q$ with an adjustable weight $w$ referring to Eq. (\ref{feedback}) and (\ref{deltaq}). However, this good congestion judgment method is not used continually in the rate adjustment of QCN, as we will show subsequently.

\subsubsection{Rate Adjustment}
As the sending rate will be decreased in congested regions, the tangent of the trajectory of QCN can only point to lower right in regions \Rmnum{3} and \Rmnum{5}, and lower left in region \Rmnum{4}, as shown in $\figurename{\ref{switch-line}}$. Constrained by this law, the actual direction of trajectory is decided by concrete rate adjustment rules. Referring to Eq. (\ref{decrease}), there is
\begin{equation}
\dot{A}(t)=G_d|F_b|A(t)\propto |F_b|
\label{decrease-slope}
\end{equation}
in congested regions \Rmnum{3},\Rmnum{4} and \Rmnum{5}. Similarly, the trajectory of QCN can only point to upper right in region \Rmnum{2}, and upper left in region \Rmnum{1} and \Rmnum{6}. In the $i^{th}$ cycle of \textbf{FR} phase, there is
\begin{equation}
\dot{A}(t)=\frac{G_d|F_b|\Sigma_j R_j(t)}{2^iT} \propto |F_b|
\label{increase-slope}
\end{equation}
where $\Sigma_j R_j(t)$ is the aggregated target sending rate, referring to Eq. $(\ref{fastrecovery})$. Totally, the speed of rate adjustment in the \textbf{RD} phase and the \textbf{FR} phase is in proportion to the extent of congestion $|F_b|$ and associated with the aggregated sending rate.

Besides, when the trajectory approaches to line $\Lambda$, there are $\dot{A}\rightarrow0$ referring to Eq. (\ref{decrease-slope}) and (\ref{increase-slope}), and accordingly $k\rightarrow0$ in Eq. (\ref{slop}). Consequently, the trajectory keeps moving from region \Rmnum{5} to region \Rmnum{3}, and from region \Rmnum{6} to region \Rmnum{1}. Referring to the constraints on the direction of trajectory in region \Rmnum{1}, \Rmnum{2}, \Rmnum{3}, \Rmnum{4}, the trajectory of QCN probably cycles around the stable point $(q_0, C)$, just as the red trajectory shown in $\figurename{\ref{switch-line}}$. Correspondingly, the queue length oscillates up and down, and eventually converges to target point $q_0$.

As a step further, the binary search method is initially used to find proper sending rate $C$ in BIC-TCP~\cite{BIC}, where the target sending rate recorded in \textbf{RD} phase is generally considered as larger than the bandwidth. In QCN, the binary search fails to find $C$ in the \textbf{FR} phase when previous \textbf{RD} is done in region \Rmnum{4} such that the target sending rate $\Sigma_j R_j(t)<C$ referring to Eq. (\ref{decrease}). In this condition, QCN needs the \textbf{AI} phase. But the parameters of the  \textbf{AI} phase can not adapt to the broad changes of bandwidth due to the existence of the constant $R_{AI}$. Even for the $10Gbps$ Ethernet, QCN still needs the \textbf{HAI} phase to adjust parameters as aforementioned. When the speed of Ethernet increases to $100Gbps$, where the available bandwidth can change in a huge range, it becomes very hard to set proper parameters for QCN. Previous work $\cite{SMCC}$ also shows that the performance of QCN is sensitive to the changes of both parameters setting and link speed. Therefore, the parameters setting of QCN is cumbersome and can not adapt to the link bandwidth with broad variation. 

The above problem occurs because the rate adjustment goal of QCN to $C$ although whether system is congested or not is judged according to Eq. (\ref{feedback}). Consequently, the speed of rate adjustment of QCN would be the same at three points $L, M, \Gamma$ shown in $\figurename{\ref{switch-line}}$, which are of the same distance to line $\Gamma$ or are judged to be of the same extent of congestion $|F_b|$. But obviously, points $L, M, \Gamma$ denote different extent of congestion and the difference among them is lost when QCN computes the weighted sum according to Eq. $(\ref{feedback})$. While these points can be treat differently to improve the performance of congestion control algorithm as discussed in next section. 

In sum, the rate adjustment of QCN is in proportion to the extent of congestion. But it can not adapt to the ultra-high speed, and the external expression of this issue is that the parameters setting of QCN becomes hard with the increase of bandwidth.

\subsubsection{Delay}
QCN is heuristically designed and the impacts of the delay is not taken into consideration. Recent work $\cite{average}$ shows the upper boundary of delay, which is in the order of hundreds of microseconds when the link capacity is $10Gbps$, for the stable QCN system is large enough in data center networks. Following the work in $\cite{average}$, we deduce a proposition on the lower boundary of delay where QCN becomes unstable (see Appendix A). When the link capacity reaches $100Gbps$, this lower boundary becomes the order of dozens of microseconds. Hence, QCN is easy to becomes unstable with the $100Gbps$ Ethernet due to large delay. Moreover, the larger the bandwidth, the smaller the lower boundary of delay. Therefore, QCN can not scale to ultra-high speed Ethernet due to the impacts of delay.

%It also owns QCN's ability to tolerate delay to involving $\Delta Q$ into the feedback packets and using the binary search.

\section{Origin of Design}
%Motivated by above challenges and inspired by QCN, we design ASM. More specifically, a
Above challenges place some specialized requirements on Ethernet congestion control. Here we summarize requirements and then show the key ideas steaming from above understanding of QCN to meet these requirements in ASM. These serve as the origin of our design.

An ideal congestion control algorithm for ultra-high speed Ethernet should have the following properties. First, it should be able to converge to the stable point $(q_0, C)$ from any initial state, which is always referred as the stability requirement of the congestion control system. Second, the convergence process should be fast and smooth~\cite{Metrics}, which are measured by the response time and largest amplitude of oscillation, as shown in $\figurename{\ref{switch-line}}$. These are two basic properties of all congestion control algorithms. Third, the congestion control algorithm should be robust to various errors, including delay. In this way, the buffer occupancy can always stay at the target queue length $q_0$ with small oscillation, and accordingly $q_0$ can be set small enough to reduce the queuing delay and reserve spare buffer for burst traffic to reduce the packets loss ratio without degrading link utilization. The last but not the least important, the above properties should not be destroyed by the wide changes of available bandwidth, namely the congestion control algorithm can scale to ultra-high speed Ethernet. With these properties, the congestion control system can admire prefect performance, i.e.,  high throughput, low packets loss ratio and low queuing delay.

Inspired by above understanding of QCN on the Q/R plane, we are aware that $\Delta Q$ is not only useful in congestion detection, as what QCN has done, but also helpful to scale congestion control algorithm to ultra-high speed Ethernet, because $\Delta Q$ reflects bandwidth and  aggregated sending rate referring to Eq. (\ref{deltaq}). More specifically, with the help of $\Delta Q$, we are able to choose rate adjustment rules such that trajectory of congestion control system moves towards any desired direction, i.e., $\Delta A=k*\Delta Q$ or compute the speed of rate adjustment by multiplying the desired tangent $k$ of trajectories by $\Delta Q$, referring to Eq. $(\ref{slop})$.

%used for making better rate adjustment decision based on precise congestion state.

But what is the desired direction? Intuitively, it is expected that the trajectory moves straightforwardly to the stable point $(q_0, C)$ from any initial points. But this can only be done in regions \Rmnum{2}, \Rmnum{4}, \Rmnum{5} and \Rmnum{6}, because the trajectory is constrained to point to left and right respectively in region \Rmnum{1} and \Rmnum{3}, as shown in \S2.2.2. Taking the physical interpretation of trajectory into consideration, it is natural to expect the trajectory moves along the boundary line $\Lambda$ of congestion  to the stable point. In other words, the congestion control system should be held around line $\Lambda$. Note that once the congestion control system keeps staying at line $\Lambda$, the trajectory automatically slides along it to the stable point because the queue length changes with $A(t)\neq C$ on line $\Lambda$.

As a step further, we find that the above expected behavior is similar to a special motion pattern called sliding mode motion~\cite{Itkis}, whose characteristic is insensitive to the impacts of errors. Consequently, congestion control algorithm designed in this way can address aforementioned challenges, namely adapt to the changes to bandwidth with the help of $\Delta Q$, approach to the stable point quickly and smoothly, and be robust to the impacts of delay in the sliding mode motion. Subsequently, we describe the design of ASM in detail.

\section{Design of ASM}
We present the detailed Adaptive Sliding Mode (ASM) algorithm at first, and then show why it can meet the aforementioned goals by using $\Delta Q$ for its rate adjustment to form the sliding mode motion. Note that we just update the congestion control algorithm but follow the framework defined by IEEE 802.1Qau. 

\subsection{Algorithm}
ASM is also composed by CP and AP. \newline
\textbf{Congestion Point} Similar to QCN, the switch monitors its queue length, ``samples'' incoming packets periodically with probability $p$, and generates feedback packets. The main difference is twofold. First, the offset of the queue length $Q_f\triangleq q(t)-q_0$ and the variation of the queue length $\Delta Q$ are carried in the feedback packets separately. Second, feedback packet is also generated for rate increase. \newline
\textbf{Reaction Point} 
Different from QCN, whose goal is to adjust the aggregated sending rate at $C$, ASM enforces congestion control system staying at boundary line $\Lambda$. Thus, ASM focuses on the direction of trajectories rather than whether the sending rate is increased or decreased, just like in QCN. To control the direction of trajectories, ASM always needs the information of $\Delta Q$ as discussed before. Specifically, let $r(t)$ denote the current sending rate, the concrete rate adjustment algorithm of ASM is as follows.
\begin{equation}
r\leftarrow r-\alpha Q_f-\beta\Delta Q
\label{adjustment}
\end{equation}
where coefficients $\alpha$ and $\beta$ are
\begin{equation}
\alpha=\left \{ \begin{array}{ll}
a^+ &  Q_fF_b> 0  \\
a^- &  Q_fF_b< 0
\end{array}
\right.
\quad
\beta=\left \{ \begin{array}{ll}
b^+ & Q_fF_b> 0  \\
b^- & Q_fF_b< 0
\end{array}
\right.
\label{coefficients}
\end{equation}
Here $a^+, a^-, b^+, b^-$ are constants and $F_b$ is given in Eq. $(\ref{feedback})$. Note that $\Delta Q$ is used for rate adjustment independently in Eq. (\ref{adjustment}), but QCN losses the concrete value of $\Delta Q$ in computing $F_b$ according to Eq. (\ref{feedback}) and uses $F_b$ only for rate decrease. 

%With the help of $\Delta Q$, ASM can know its concrete position in the Q/R plane. 

\begin{figure}
\centering
\includegraphics[width=2.1in]{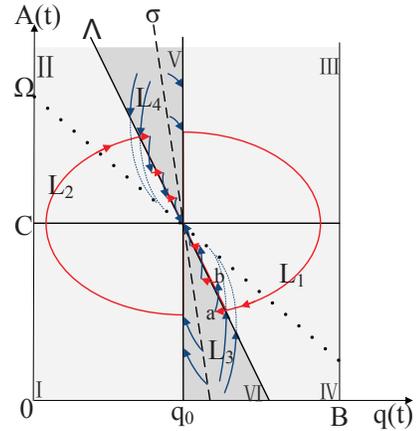}
\caption{Possible Trajectories of ASM}
\label{asm-trajectory}
\end{figure}

\subsection{Trajectory of ASM}
Subsequently, we show that the trajectory of ASM on the Q/R plane will endorse the features described in \S3 if coefficients $a^+, a^-, b^+, b^-$ are properly tuned.

To obtain the trajectory of ASM, we can describe ASM by the following differential equations. (see Appendix B for the detailed deduction)
\begin{equation}
\frac{dx^2(t)}{dt^2}+\frac{N\beta}{pC}\frac{dx(t)}{dt}+\alpha x(t)=0
\label{fluid-flow-new}
\end{equation}
where $x(t)\triangleq q(t)-q_0$ and $N$ is the number of sources. The trajectories of such a second order ordinary differential equation with different characteristic roots have been listed in $\cite{Itkis}$. To enforce the trajectory of ASM staying at the boundary line $\Lambda$, we choose proper coefficients as follows such that trajectory moves towards line $\Lambda$ in both sides of it, as $L_1$ and $L_3$, or $L_2$ and $L_4$ illustrated in $\figurename{\ref{asm-trajectory}}$. Namely
\begin{equation}
\lim\limits_{F_b(t)\rightarrow 0}^{}F_b(t)\frac{dF_b(t)}{dt}\leq 0
\label{smm-condition}
\end{equation}
This inequality is equivalent to 
\begin{equation}
\left \{ \begin{array}{ll}
\frac{w^2N}{p^2C^2}a^--\frac{wN}{p^2C^2}b^-+1< 0 \\
\frac{w^2N}{p^2C^2}a^+-\frac{wN}{p^2C^2}b^++1> 0
\end{array}
\right.
\label{sliding}
\end{equation}
Note that $b^-\neq0$ is sufficient to inequalities (\ref{sliding}), or in other word $\Delta Q$ is indispensable for enforcing the trajectory of ASM at the congestion boundary line. When inequalities (\ref{sliding}) are satisfied, we can know the distribution of the roots of the characteristic equation of $(\ref{fluid-flow-new})$, and accordingly know the trajectory of ASM referring to $\cite{Itkis}$. The detail is summarized in Appendix C.

Specifically, when  $Q_fF_b>0$, i.e., in regions \Rmnum{1},\Rmnum{2},\Rmnum{3} and \Rmnum{4}, the trajectory of ASM is spiral as the red line shown in $\figurename{\ref{asm-trajectory}}$. When $Q_fF_b<0$, i.e., in regions \Rmnum{5} and \Rmnum{6}, the trajectory of ASM is parabola with asymptotic lines $\sigma$ and $\Omega$, as the blue line shown in $\figurename{\ref{asm-trajectory}}$. Consequently, the trajectory of ASM can approach to the boundary line $\Lambda$ quickly from whatever initial state. Once reaching line $\Lambda$, the trajectory will move back and forth between regions \Rmnum{2} and \Rmnum{5}, or between regions \Rmnum{4} and \Rmnum{6}, depending on different feedback information and the corresponding rate adjustment rules $(\ref{adjustment})$, as illustrated in $\figurename{\ref{asm-trajectory}}$. For example, started from region \Rmnum{3}, the trajectory reaches line $\Lambda$ in region \Rmnum{4} spirally. After that, due to various factors such as the delay and the errors discussed in \S2.2.3, the trajectory will pass over line $\Lambda$, go into region \Rmnum{6} at little and stay at point $a$. Subsequently, the other rate adjustment rule with parameters $a^-$ and $b^-$ takes effect in \Rmnum{6} and forces the trajectory moving from point $a$ back to region \Rmnum{4} at point $b$ along a parabola. The switching is repeated uninterruptedly, and the trajectory slides along line $\Lambda$ to the stable point. This motion pattern is called sliding mode motion.

In the sliding mode motion, the trajectory or the dynamic of congestion control system is equivalently described by Eq. $F_b(t)=0$, whose solution is 
\begin{equation}
q(t)=q_0+[q(0)-q_0]e^{-\frac{pC}{w}t}
\label{smm-queue}
\end{equation}
Namely, the queue length $q(t)$ converges to the target point $q_0$ exponentially along line $\Lambda$ in the sliding mode motion. Similarly, the sliding mode motion can also appear between region \Rmnum{2} and region \Rmnum{5}. \\
\textbf{Remark} Note that rate adjustment rules just force the trajectory of congestion control system staying at the boundary line $\Lambda$ of congestion. The trajectory can not stay at any other fixed points but \textit{automatically} slide along line $\Lambda$ to the stable point $(q_0, C)$ because the queue length changes with $A(t)\neq C$ on line $\Lambda$.

\subsection{Convergence and Stability}
From the trajectory of ASM, we can know that ASM can converge to the stable point starting from any initial stat, namely ASM is stable. 

Moreover, the convergence process is composed by two parts: the approaching process refers to ASM moving from any initial point to the boundary line $\Lambda$, and the sliding mode motion where ASM slides along $\Lambda$ to the stable point. Note that the approaching process is directly controlled by the rate adjustment rules, while the sliding mode motion of ASM is equivalently controlled by Eq. $(\ref{smm-queue})$, which has nothing to do with the coefficients of the rate adjustment rules. Consequently, these two parts can be designed separately.

More specifically, the approaching process can be accelerated by setting large coefficients for rate adjustment rules, while small coefficients are required in the sliding mode motion. In general, we suggest the following adaptive parameters setting guidelines.
\begin{itemize}
\item First, all parameters should satisfy inequalities $(\ref{sliding})$. 
\item Large parameters $a^+_A, a^-_A, b^+_A$ and $b^-_A$ are set initially.
\item When $F_b(t)$ is smaller than a bound $B_F$, i.e., when ASM is going to enter into the sliding mode motion, small parameters $a^+_S, a^-_S, b^+_S$ and $b^-_S$ are employed.
\item When ASM is close to the stable point, namely $|Q_f|+|\Delta Q|<B_0$, the large parameters $a^+_A, a^-_A, b^+_A$ and $b^-_A$ are reused in case ASM passively deviates stable point again.
\end{itemize}
%Of cause, parameters $a^+_A, a^-_A, b^+_A, b^-_A, a^+_S, a^-_S, b^+_S, b^-_S$ should also 

With adaptive parameters setting method, the approaching process of ASM is accelerated by large coefficients. And the converging speed of queue length is exponential in sliding mode motion referring to Eq. $(\ref{smm-queue})$. Therefore, the whole convergence speed of ASM will be greatly improved. In addition, since the trajectory of ASM reaches the stable point directly, the amplitude of the oscillations of queue length can be kept small.

\subsection{Impacts of Delay}
The above analysis describes only the ideal behaviors of ASM, where the impacts of delay are ignored. Surprisingly, the behaviors of ASM, even accounting for the impacts of delay, are still a good approximation of the ideal sliding mode motion.

More specifically, we obtain Proposition \ref{Proposition2}, which shows that the impacts of delay can be split into three parts. (see Appendix D). The first one is about the congestion detection, namely the feedback information is delayed. Correspondingly, the slope of the boundary line $\Lambda$ slightly changes, as shown in Eq. (\ref{switching}). The second one is related to the rate adjustment. The computing of sending rate based on old feedback information can be treat as a slight change of the coefficients of rate adjustment rules, and the precise amplitude of the parameters drift is given in Eq. (\ref{para-drift}) and (\ref{amplitude}). The third one is that the computed sending rate takes effect with some lag. This can be treat as error on the result of rate adjustment, which is bounded by inequality (\ref{error}).

Proposition \ref{Proposition2} also explains why the impacts of delay with determined bound can be tolerated by ASM. First, according to $\cite{Itkis}$, the boundary line is enlarged to be a boundary region called quasi-sliding region as illustrated in $\figurename{\ref{smm}}$. Correspondingly, the sliding mode motion is replaced by the quasi-sliding mode motion, which is approximately equivalent to the ideal sliding mode motion. In other words, ASM can still slide to the stable point and the sliding process is equivalently described by Eq. $(\ref{smm-queue})$ in both motion patterns. Second, note that Eq. $(\ref{smm-queue})$ does not involves any of the drifted parameters used in rate adjustment rules $(\ref{adjustment})$. Although the parameters drift would result in a slight change of the direction of trajectory in the approaching process, but the motion pattern of the trajectory persists and the trajectory is always able to reach the boundary region. In total, the impacts of parameters drift can be tolerant by ASM. On the other hand, the drifted parameters can also be compensated through proper parameters setting. Finally, when the impacts of bounded error $E_1$ given in Eq. (\ref{error}) are taken into consideration, ASM can still enter into the sliding mode motion and slide to the stable point outside the region $H_0$ drawn in $\figurename{\ref{smm}}$. With proper parameters setting, the range of $H_0$ becomes small enough according to Eq. (\ref{H0}) and correspondingly the queue length stays close enough to the stable point. Thus, bounded disturbance $E_1$ is tolerable by ASM. 
%Of cause, all above analysis holds just when delay is small.

\begin{figure}
\centering 
\includegraphics[width=2in]{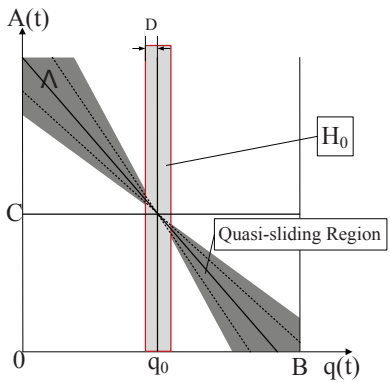}
\caption{Impacts of Delay on ASM}
\label{smm}
\end{figure}

\subsection{Deployment}
Although ASM has the above theoretical advantages, some issues should be considered carefully in deploying ASM. \\
\textbf{CPID}
As multiple bottleneck links exist, the feedback packets of ASM should also involve an identification of the congested switch, called CPID. When the sending rate is decreased, the sender saves the CPID. And the sending rate is increased only when the stored CPID matches the CPID embedded in the feedback packet. In this way, ASM can ignore the feedback information of underloaded switches. Because each packet passes only several switches and one feedback packet is generated when $100$ packets pass one switch (assume $p=0.01$), the number of feedback packets is naturally smaller than that in the case of acknowledging each packet. Thus, the feedback mechanism is light and acceptable.\\
%Not that the number of feedback packets is associated with the overload of the switch.
\textbf{Sampling}
As the sampling interval may become smaller than the propagation delay in ultra-high speed Ethernet, a feedback packet may be generated before the former feedback packet arriving the sender. Moreover, when the traffic is burst, these two feedback packets probably go to the same sender. Consequently, one of the feedback packet becomes redundant. In our implementation, the destination address of the feedback packet is recorded and then the incoming packet, whose source address is the same as the recorded destination address, will not be sampled. In this way, two consecutive feedback packets to the same sender can be avoided. \\
\textbf{Weight}
Different rate adjustment rules of ASM work depending on the sign of $Q_f$ and $F_b(t)$. Referring to Eq. $(\ref{feedback})$ and $(\ref{deltaq})$, we can know that the range of the region corresponding to state $Q_fF_b(t)<0$ would be very small in $\figurename{\ref{asm-trajectory}}$. In reality, this region may be skipped over and the sliding mode motion would never occur, because the sampling and the rate adjustment are discrete. Therefore, the weight $w$ should be set relatively large to enlarge the width of the region corresponding to state $Q_fF_b(t)<0$.\\
\textbf{Parameters Setting} The analysis in \S4.2 suggests the parameters setting is guided by inequalities $(\ref{sliding})$. But Eq. $(\ref{para-drift})$ and $(\ref{H0})$ should also be taken into account to reduce both the impacts of parameters drift and the range of region $H_0$. Obviously, all the inequalities suggest large $b^-$ and $a^+$, and small $b^+$ and $a^-$.

In the hardware implementation, the concrete values of parameters $a^+, a^-, b^+$ and $b^-$ are also decided by the number of bits used to represent the queue length, because the amount of the rate changed is also in linear proportion with $Q_f$ and $\Delta Q$, as indicated by Eq. $(\ref{adjustment})$.

\iffalse
\textbf{Quantization}
In the hardware implementation, the feedback information is quantized into several bits. In general, the quantization is linearized. For example, when the buffer size is $128KB$ and the number of bits used to represent the instantaneous queue length is $7$, the quantized queue length changes $1$ when the real queue length changes $1KB$. However, this quantized result is not balanced to the rate adjustment algorithm. Assume the target queue length is $4KB$, the quantized queue length is smaller than the target point only when it takes value in $\{0,1,2,3\}$, while larger than the target point when it takes value in $\{5,6,\dots,127\}$. It is better to let the rate adjustment algorithm see that quantized queue length is smaller than the target point when it takes value in $\{0,1,\dots,63\}$, and vice versa. Or else, in the rate adjustment, the parameters need to be very small for the condition that the queue length is smaller than the target point, and vice versa.\\
\fi

\section{Evaluation}
We evaluate ASM using both simulations on the ns2 platform and hardware implementations on the NetFPGA platform $\cite{netfpga}$. The hardware implementation is used to confirm the basic motion pattern of ASM and validates our simulator. As our NetFPGA can only handle traffic at the speed of $1Gbps$ and each card has only four Ethernet ports, simulations dominate the evaluation of ASM.

\subsection{NetFPGA-based Experiments}
The NetFPGA card is developed for fast prototyping of advanced network protocol. We implement the CP of ASM based on the Ethernet Switch project and implement the RP of ASM based on the Packet Generator projects $\cite{netfpga}$. In hardware implementation, both $Q_f$ and $\Delta Q$ are quantized into 8 bits, the buffer size is $128KB$ and the default parameters are that $w=32, B_0=16, B_{\sigma}=64, p=0.01$ and $a^+_A, a^-_A, b^+_A, b^-_A$, $a^+_S$, $a^-_S$, $b^+_S, b^-_S$ are set such that the sending rate changes no more than $\frac{1}{8}, \frac{1}{64}, \frac{1}{16}, \frac{1}{2}, \frac{1}{16}$, $\frac{1}{128}$, $\frac{1}{32},  \frac{1}{4}$ of the link capacity respectively in each rate adjustment.

\begin{figure}
\centering
\subfigure[Queue Length and Rate]{
    \centering%
    \includegraphics[width=1.4in]{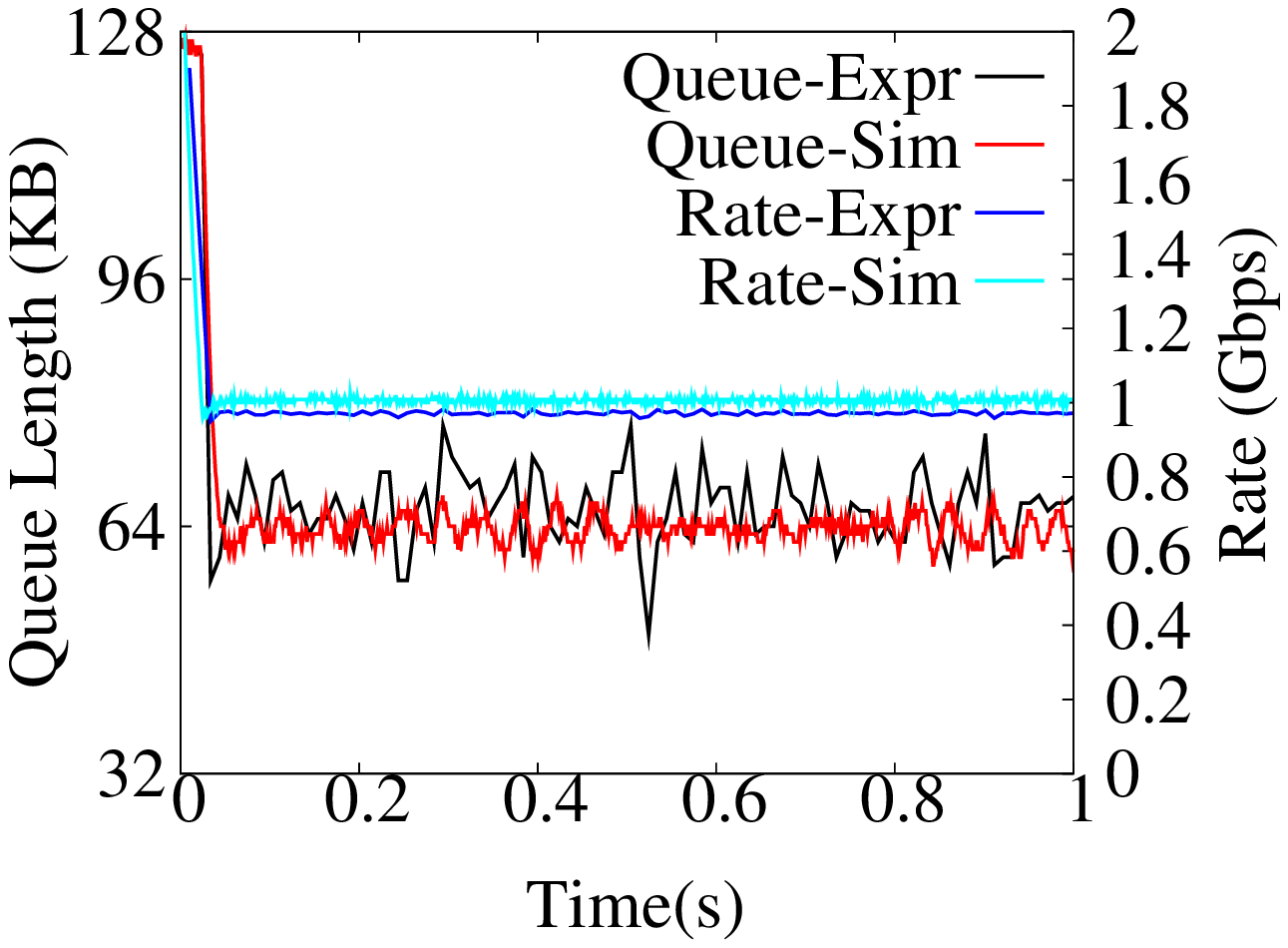}
}
\subfigure[Trajectories]{
    \centering%
    \includegraphics[width=1.4in]{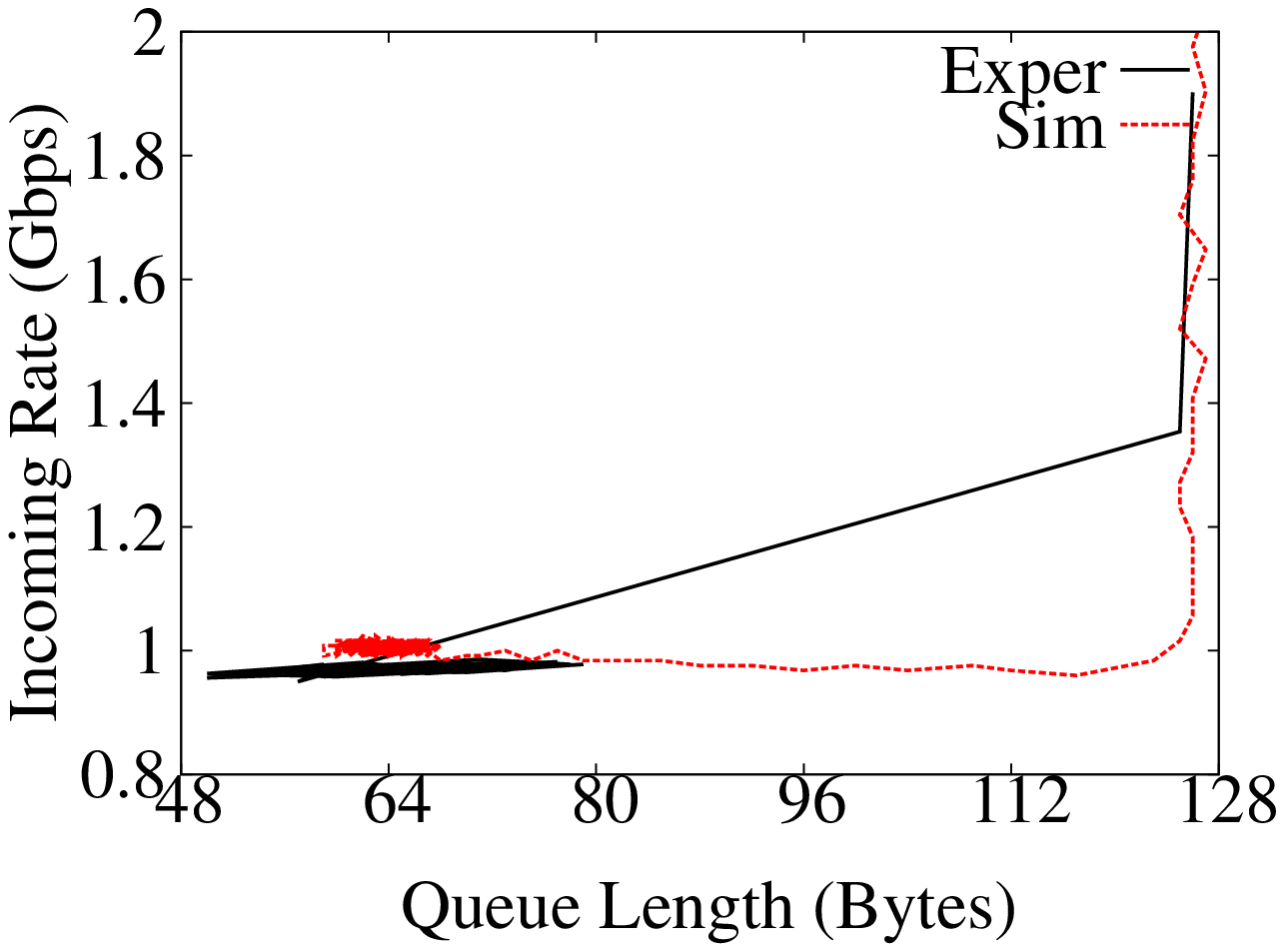}
}
\caption{Sliding Mode Motion of ASM}
\label{sliding-ASM}
\end{figure}

\subsubsection{Sliding Mode Motion of ASM}
Experiment is conducted with 3-source dumbbell topology, $1Gbps$ Ethernet and $2\mu s$ propagation delay at each link. Parameter $q_0=64 packets$ for the convenience of observation. When all the three sources concurrently start long lived flows at the speed of $500Mbps$, the dynamics of queue length and the aggregated sending rate at the congested switch are shown in $\figurename{\ref{sliding-ASM}}$(a). Obviously, both the queue length and the aggregated sending rate converge to the stable point quickly. Putting queue length and aggregated sending rate together, we obtain the trajectory of ASM on the Q/R plane, as shown in $\figurename{\ref{sliding-ASM}}$(b). The trajectory approaches to the stable point directly and finally chatters around the stable point, similar to the ideal trajectory of sliding mode motion. Therefore, the sliding mode motion of ASM is confirmed by experiment. 

%Note that the slop of the congestion boundary line is $k=\frac{w}{pC}\approx 0$ referring to E.q (\ref{feedback}) in this condition.  

\subsubsection{Simulator Validation}
To simulate the hardware implementation, we manually do quantization of the queue length and the sending rate, and add some corresponding random factors. Simulators use the same parameters setting as experiments. The simulation results are also involved in $\figurename{\ref{sliding-ASM}}$.
The differences between experiment result and simulation result are twofold. First, the slopes of the lines, along which ASM approaches to the stable state, are different. This is probably because the interval of measurement, where queue length is read out from NetFPGA to its hosted compute every $10ms$, is comparable to the time taken for ASM to converge to the stable point in experiment. Second, the amplitude of the oscillations of queue length is slightly large in experiment. Such oscillation is expected because the experiment involves many random factors, even though major factors have been considered in the implementation of simulator. Except these differences, the experimental result and the simulation result are identical in $\figurename{\ref{sliding-ASM}}$. Therefore, we believe that the following simulation results are also trustworthy.

\subsection{NS2-based Simulations}
Due to the limited number of ports and link speed, we evaluate ASM with simulation subsequently. QCN is implemented as the comparison algorithm on the ns2 platform by following the pseudo code version 2.3 $\cite{code23}$.
\begin{figure*}
\begin{minipage}{0.33\textwidth}
\centering
\includegraphics[width=2in]{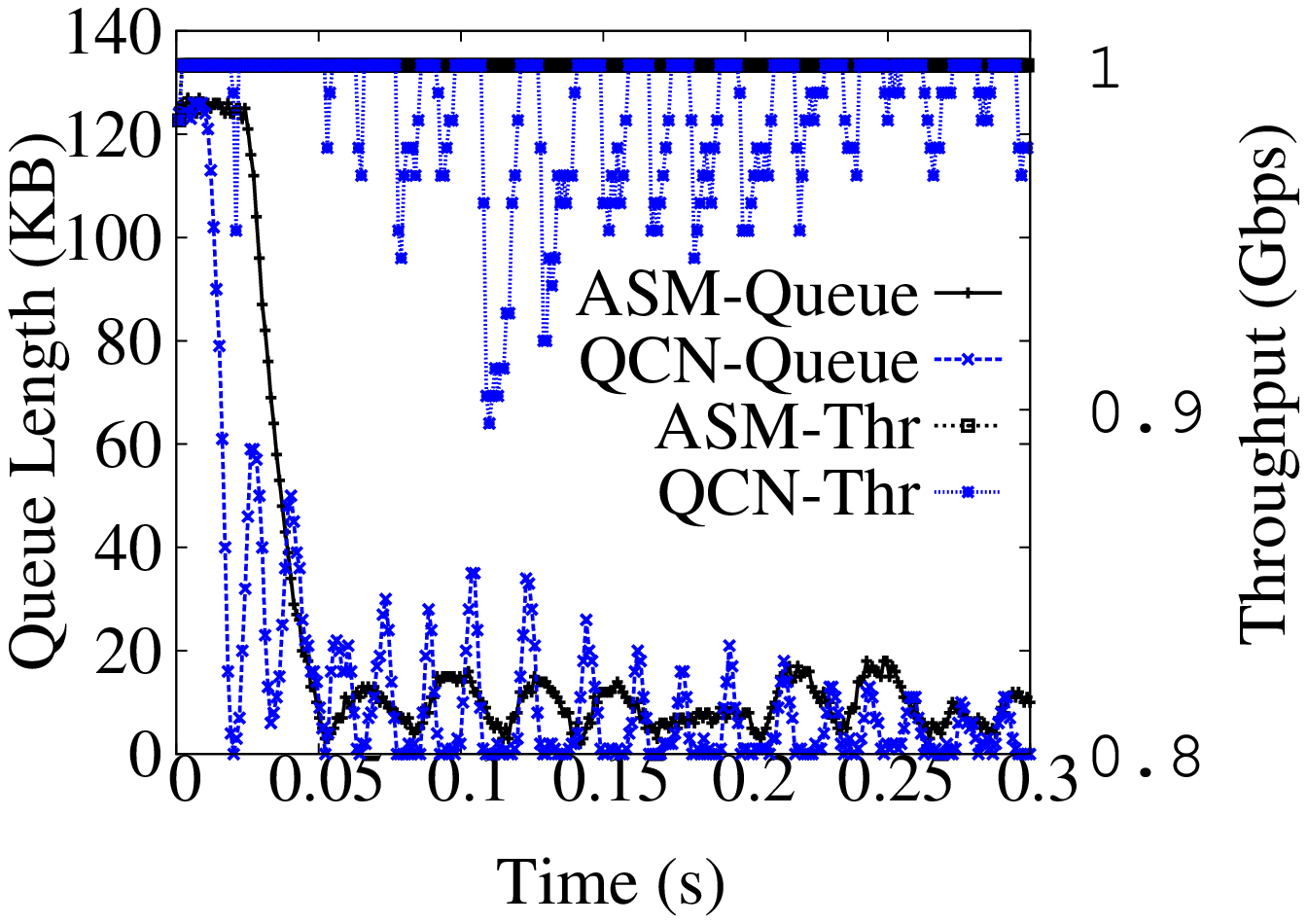}
\caption{Ability of maintaining small queue length}
\label{small-queue}
\end{minipage}
\vspace{0.1cm}
\begin{minipage}{0.33\textwidth}
\centering
\includegraphics[width=2in]{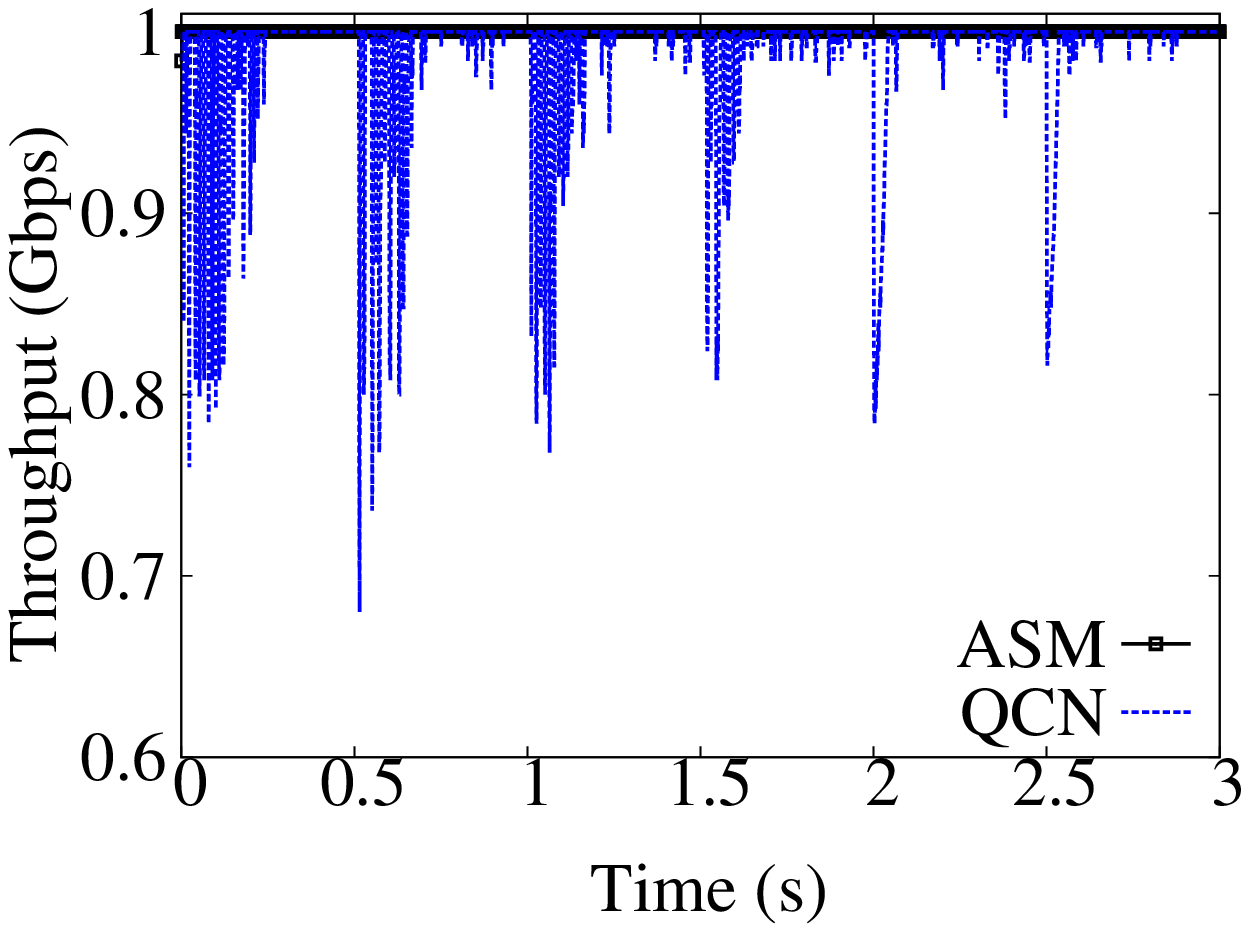}
\caption{Convergence}
\label{convergence}
\end{minipage}
\vspace{0.1cm}
\begin{minipage}{0.33\textwidth}
\centering
\includegraphics[width=1.9in]{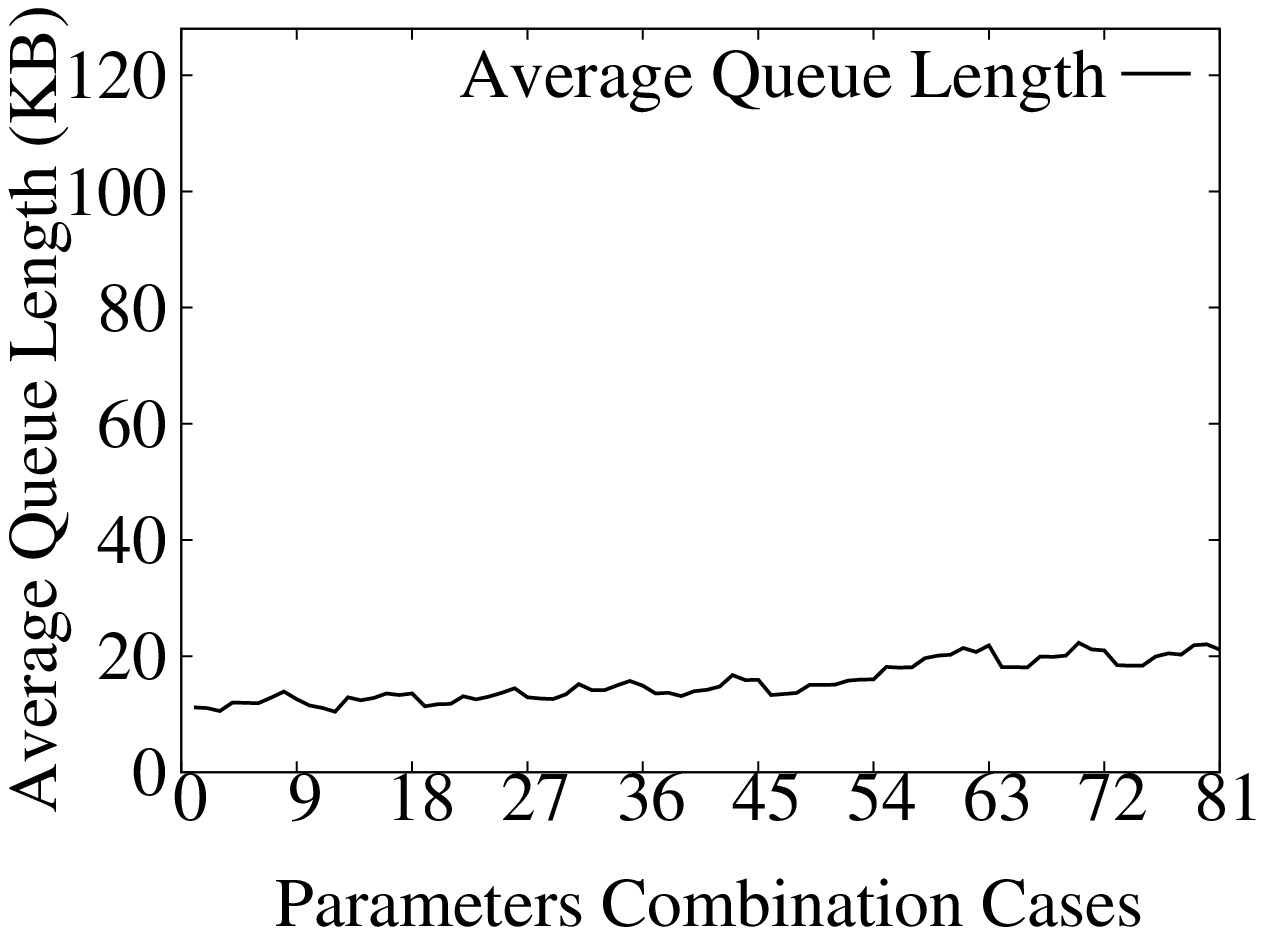}
\caption{Average queue length with different combination of parameters}
\label{parameter}
\end{minipage}
\end{figure*}
\begin{figure*}
\begin{minipage}{0.33\textwidth}
\centering
\includegraphics[width=2in]{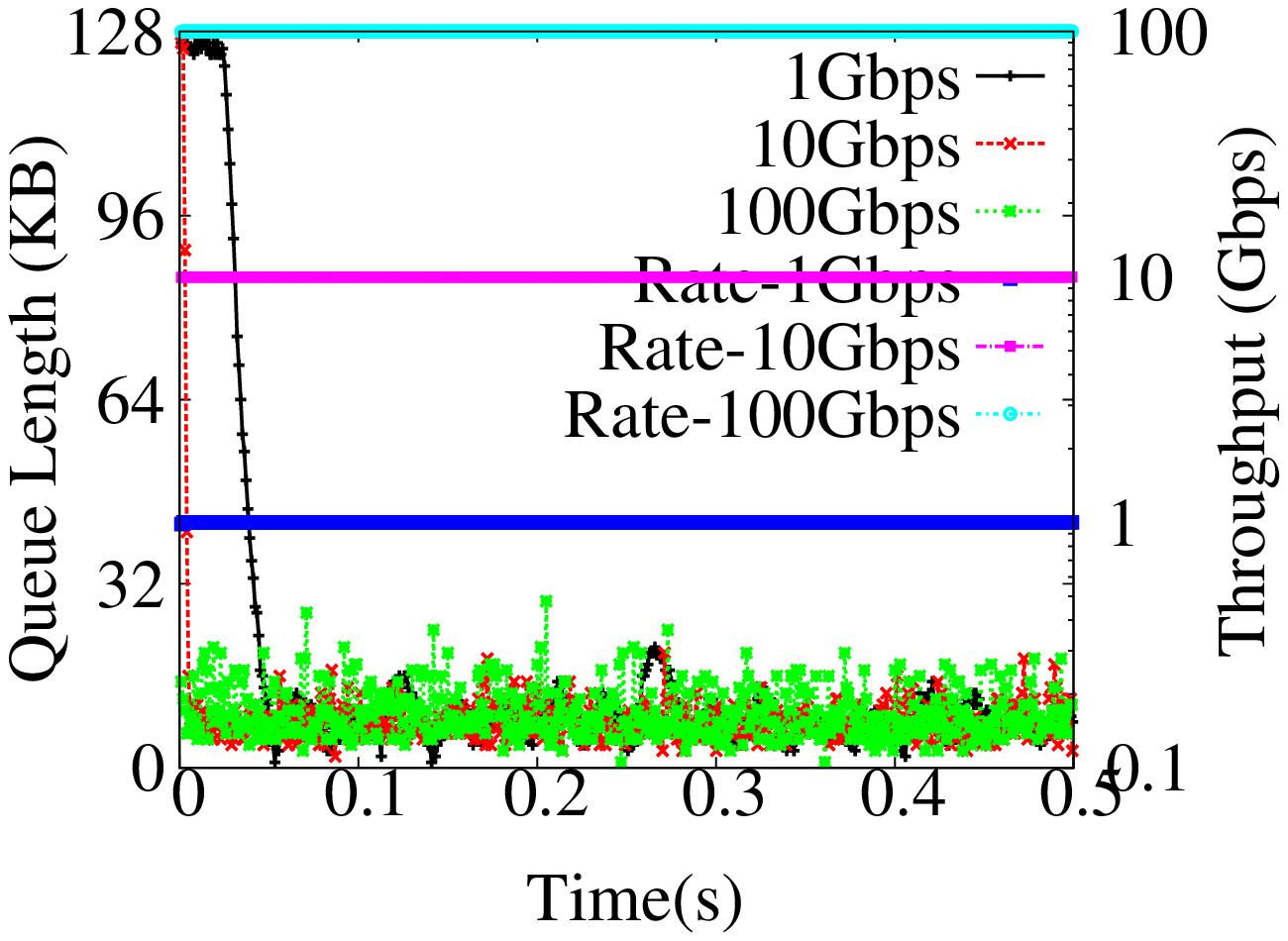}
\caption{Different Bandwidth}
\label{bandwidth}
\end{minipage}
\vspace{0.1cm}
\begin{minipage}{0.33\textwidth}
\centering
\includegraphics[width=2in]{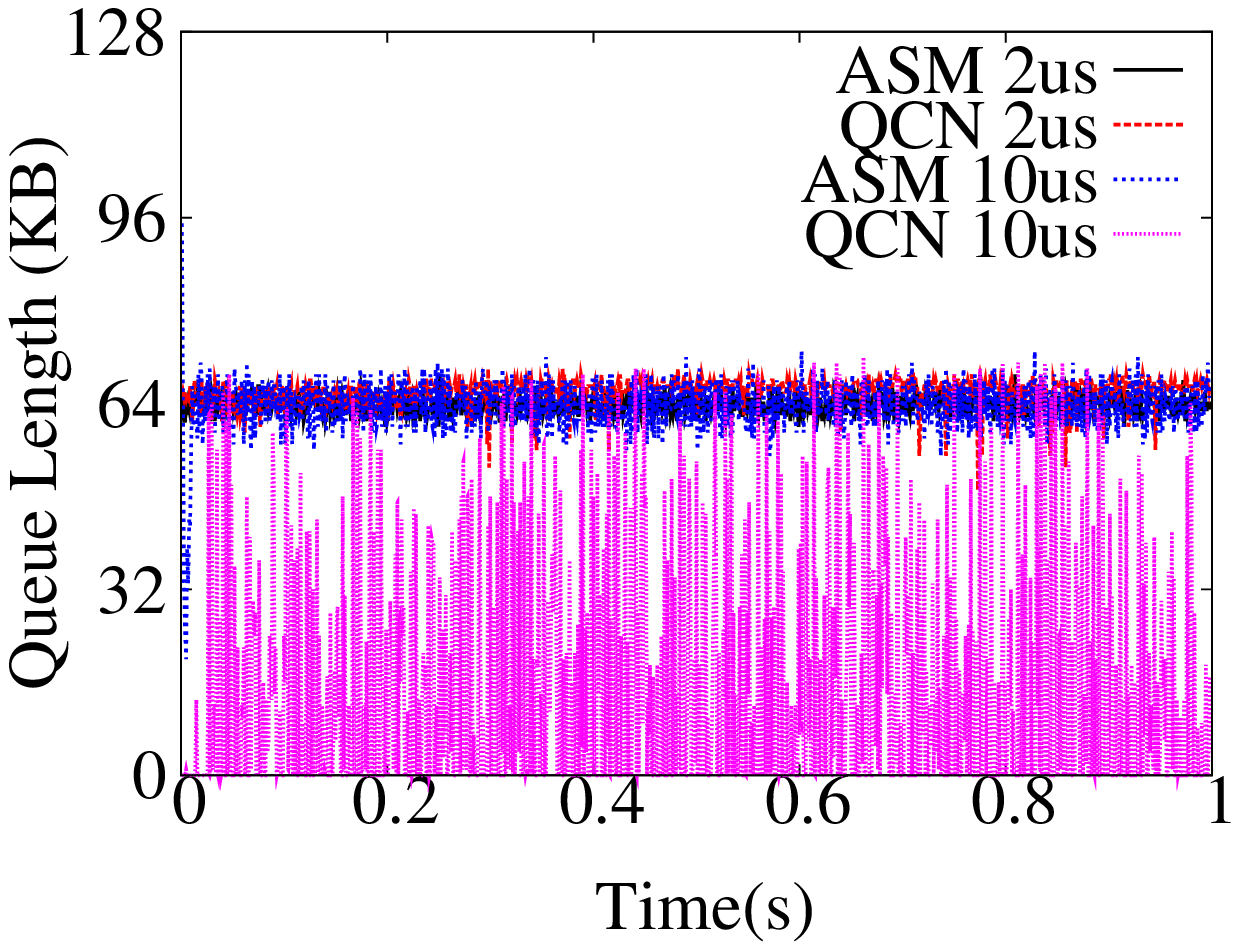}
\caption{Different Delay with $100Gbps$ Ethernet}
\label{delay}
\end{minipage}
\vspace{0.1cm}
\begin{minipage}{0.33\textwidth}
\centering
\includegraphics[width=1.9in]{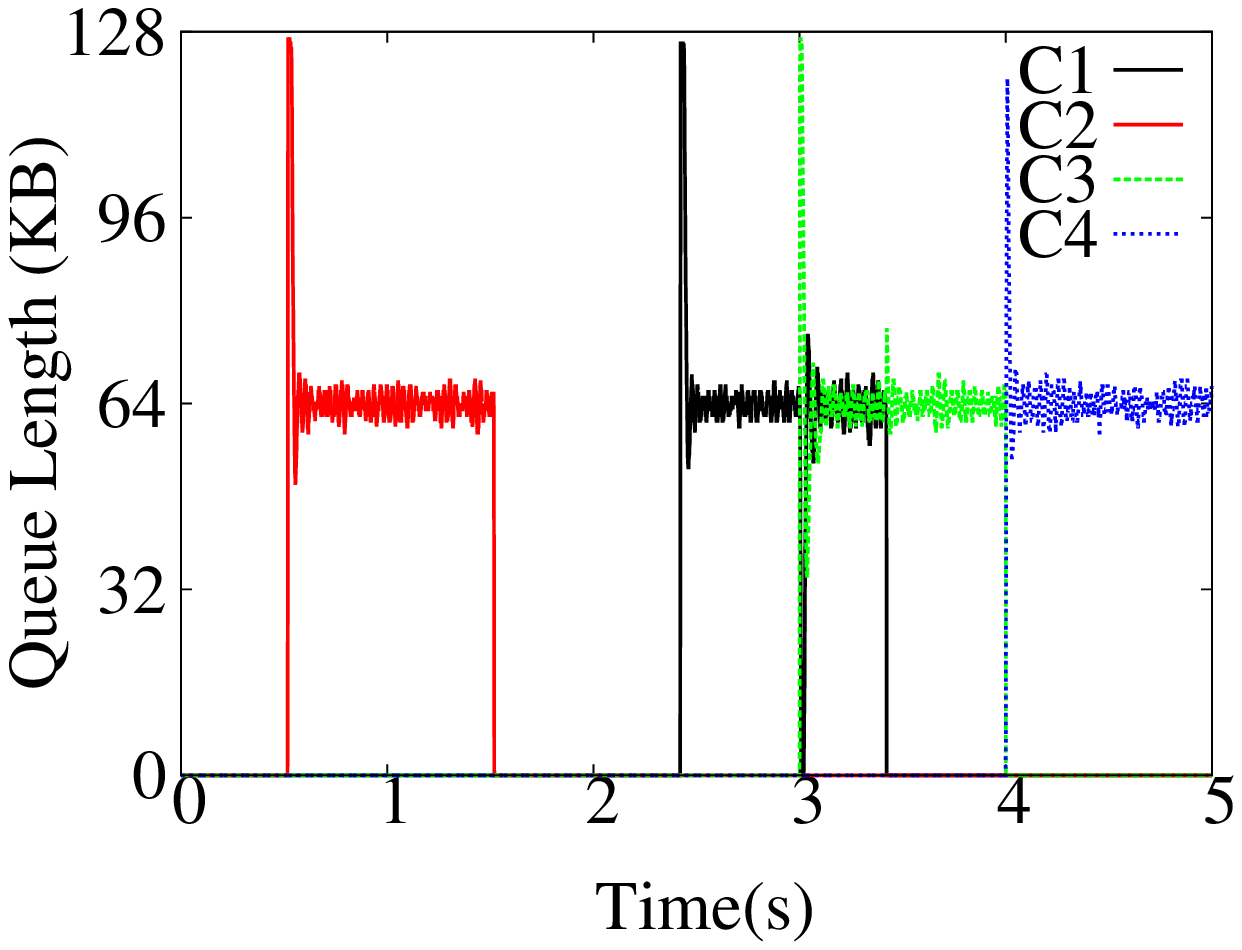}
\caption{Dynamics of queue length at switches with the changes of congested switch}
\label{park-result}
\end{minipage}
\end{figure*}
\textbf{Small Queue Length}
As the direct goals of ASM is to maintain small queue length for low queuing delay, high throughput and so on, we set the target queue length $q_0=5 packets$. The dumbbell topology is still used and the number of sources is increased to $10$. The dynamic of queue length and throughput at the bottleneck link are drawn in $\figurename{\ref{small-queue}}$. Obviously, ASM can always ensure nonempty buffer and does not loss throughput, while QCN frequently drains its buffer and loss throughput.  \\
\textbf{Convergence}
To show the convergence of ASM, we let $5$ sources start at $1Gbps$. Two of them start at the beginning, the rest three sources start subsequently one by one every $0.5s$ latter. At the $2nd$ second, sources are finished one by one every $0.5s$. The instantaneous throughput at the bottleneck link is shown in $\figurename{\ref{convergence}}$. ASM can always maintain high throughput with the incoming or finishing of traffic, while QCN suffers from degrading of throughput. It implies that ASM can converges to the stable point more quickly than QCN with smaller oscillation. \\
\textbf{Parameters} We double or halve parameters $a^+_A, a^-_A$, $b^+_A$, $b^-_A$ respectively, obtaining $3^4=81$ different combinations of these parameters. When each combination is used, we gather the average queue length. As shown in $\figurename{\ref{parameter}}$, the average queue length changes a little even parameters of ASM are changed in a large range. In other words, ASM is insensitive to the changes of parameters. \\
\textbf{Bandwidth}
We modify the link capacity from $1Gbps$ to $100Gbps$, and the dynamic of queue length and the throughput are presented in $\figurename{\ref{bandwidth}}$. With the changes of bandwidth, the amplitude of oscillation almost keeps unchanged, and there is no throughput degradation. Therefore, ASM can adapt to the broad changes of bandwidth.
\\
\textbf{Delay}
We change the propagation delay with $100Gbps$ Ethernet, the dynamics of queue length at the congested switch are shown in $\figurename{\ref{delay}}$. Obviously, both ASM and QCN can maintain small oscillation of the queue length when the delay is extremely small. But when the propagation delay of each link is increased to $10\mu s$, namely when RTT is $60\mu s$, QCN becomes unstable and the buffer of congested switch is emptied frequently. This result agrees with the conclusion in Proposition $\ref{Proposition1}$. While ASM can tolerate the impacts of delay, as claimed in \S4.4.

\noindent\textbf{Multiple Bottlenecks}
Subsequently, we will test the performance of ASM on the classic parking lot topology, where multiple bottlenecks exist. As shown in $\figurename\ref{park-lot}$, all links are of capacity $1Gbps$, flows $F1, F4, F5$ start at $0s, 3s, 4s$ and are terminated at $5s, 4s, 5s$ respectively, and flows $F2, F3$ start randomly at any time in $[0s, 3s]$ and last for one second. Obviously, the congested switches change frequently in this condition. The dynamics of the queue length at switches are shown in $\figurename\ref{park-result}$. According to the simulation results, ASM can always rapidly response to the changes of congested switches.

\begin{figure}
\centering
\includegraphics[width=2.5in]{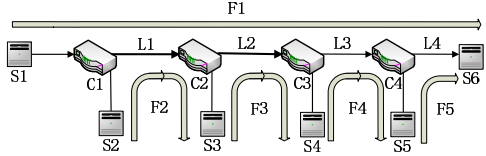}
\caption{Parking Lot Topology}
\label{park-lot}
\end{figure}

\section{Related Work}
As we focus on scaling congestion control algorithm to ultra-high speed in this paper, we mainly compare ASM with the related congestion control algorithms. 

%\subsection{Goals}
The initial goal of congestion control is throughput. Consequently, traditional TCP keeps on filling the buffer of congested switches until packets are dropped. This method achieves as higher throughput as possible, but suffers from large queuing delay and large packets drop ratio. As low queuing delay and small packets loss ratio are eventually included as goals of congestion control, Active Queue Management (AQM) schemes, such as RED $\cite{red}$ and PI $\cite{PI}$, are developed to help TCP to control the queue length of congested switches at the target points. This target point is always not small to meet the trade-off among throughput, queuing delay and packets loss ratio~\cite{Metrics}. QCN also works in this way. However, smaller queuing delay is desired with the popularity of high speed Ethernet in data center networks, HPC networks and storage networks. Recent work for data center networks such as DCTCP~\cite{DCTCP} and HULL~\cite{HULL} have tried to lower their target queue length as much as possible, even if the throughput is decreased a little. To ensure high throughput, low queuing delay and small packets loss ratio for ultra-high speed Ethernet, we focus on limiting the oscillation of queue length so as to loose contradictions among these goals.

Furthermore, to limit the oscillation of queue length, it is important to ensure congestion control system converging to the stable point quickly and smoothly. But in general, there is a trade-off between response time and smoothness $\cite{Metrics}$. The ultra-high speed make it hard to take this trade-off by greatly shortening the finish time of small flows and enlarge the oscillation. Besides, both response time and smoothness are desired by congestion control algorithms for ultra-high speed Ethernet. Or else, either small flows finish in the slow converge process or packets are dropped due to large oscillation. ASM solves this problem by decoupling response time and amplitude of oscillation to same extent, with the help of sliding mode motion, which is independent to its approaching process.

%\subsection{Rate Adjustment}
Traditional rate adjustment algorithms can be roughly divided into two categories. One category computes proper sending rate directly. RCP~\cite{RCP} and XCP~\cite{XCP} are of this category. The key of these algorithms is to collect enough information for precise computing, and the drawback is that the overhead of collecting all information is always high. The other category probes for proper sending rate $C$. The classic AIMD algorithm and its extensions are of this category. The characteristic of these algorithms is that a rate (or window size) decrease rule is used to response to congestion and a rate increase rule is used to probe proper sending rate step by step when these is no congestion. However, the length of each step is hard to be decided due to the bandwidth variation. For example, traditional TCP increases its window constantly, and suffers from poor performance in high speed networks. To adapt to the high speed, GAIMD~\cite{GAIMD}, Scalable TCP~\cite{S-TCP}, High Speed TCP~\cite{HS-TCP}, TCP-Illinois~\cite{illinois} and DCTCP~\cite{DCTCP} changes the factors of AIMD as functions of different congestion indicators such as packet loss ratio, RTT and sequence of ECN bits, and BIC-TCP even replaces addictive increase by binary search. In these ways, the length of step is changed dynamically, generally in proportion to the extend of congestion. However, lacking of information of bandwidth, they always step over the proper sending rate $C$ and accordingly oscillate up and down around the efficiency line. Correspondingly, they cycle around the stable point in the Q/R plane, as discussed in \S2.3.3. In contrast, ASM changes its rate adjustment target from efficiency line $A(t)=C$ to the boundary line $F_b(t)=0$, and approaches to the stable point directly along the boundary line regardless of link speed by utilizing $\Delta Q$, which reflects the difference between link capacity and aggregated sending rate, to control the direction of its trajectory on the Q/R plane.

We also aware that the congestion control system can evolute itself even without any rate adjustment rules. For example, without rate adjustment rules, the queue length increases when $A(t)>C$, and vice versa. This evolution is not emphasized previously but utilized in sliding mode motion. What we have to do is enforce the trajectory of ASM staying at the congestion boundary line with the help of $\Delta Q$ in rate adjustment, and then the trajectory of ASM slides automatically to the stable point because $A(t)\neq C$. 

%and accordingly rate adjustment is mainly in proportion with the extend of congestion. Correspondingly, the trajectory cycles around the stable point in the Q/R plane. Conversely, taking

%multi-bits information for the rate decrease algorithm, DCTCP can not only reduce the average queue length as RED, but also suit for the speed of $1Gbps$ and $10Gbps$.
%\textbf{Remark} Referring to equation $(\ref{queue})$, the 
%\subsection{Other Related Work}
%Sliding Mode Motion
The sliding mode motion has been used in~\cite{SMVS, PengYan} to analyze the nonlinearity of TCP and design packets marking or dropping rules. But we use it to design rate adjustment algorithm in this paper.

\iffalse
\begin{figure}
\centering
\includegraphics[width=2.8in]{}
\caption{Trajectories of Congestion Control Algorithms}
\label{tcp}
\end{figure}

\subsection{Delay}

\subsection{Other Related Work}
%Sliding Mode Motion
ASM is designed using the method of sliding mode control. Previously, this method has been used in~\cite{SMVS, PengYan} to analyze the nonlinearity of TCP and design packets marking or dropping rules. But rate adjustment algorithm is designed in this paper. 
\fi

\section{Conclusion}
The ultra-high speed places two challenges on congestion control of Ethernet. One is that the rate adjustment algorithm should adapt to the broad changes of bandwidth. The other is that the impacts of small delay are magnified by the ultra-high speed. The state-of-art algorithm QCN is unaware of these challenges and accordingly can not scale to ultra-high speed Ethernet. 

We stipulate that congestion control of Ethernet should maintain small oscillation of queue length for high throughput, low queuing delay and small packets loss ratio. Utilizing $\Delta Q$ in rate adjustment, ASM can stay at the congestion boundary line and straightforwardly moves along it to the stable point regardless of link speed. Employing sliding mode motion, ASM decouples the response time and the smoothness of convergence process, and accordingly accelerate the convergence process without enlarging the amplitude of oscillations. Moreover, the impacts of the delay on ASM is equal to the combination of the impacts of parameters shift, boundary line drift and a bounded disturbance on the rate adjustment result, all of which are tolerable. Experiments and simulations confirm the good properties of ASM and show that ASM outperforms QCN. 

We emphasize that analyzing congestion control algorithm on the Q/R plane, utilizing the derivative of queue length in rate adjustment and enforcing congestion control system moving along the congestion boundary line to stable point are also helpful to scale other congestion control algorithms to ultra-high speed networks.

%\end{document}  % This is where a 'short' article might terminate

%ACKNOWLEDGMENTS are optional
%\section{Acknowledgments}
%This section is optional; it is a location for you
%to acknowledge grants, funding, editing assistance and
%what have you.  In the present case, for example, the
%authors would like to thank Gerald Murray of ACM for
%his help in codifying this \textit{Author's Guide}
%and the \textbf{.cls} and \textbf{.tex} files that it describes.

% The following two commands are all you need in the
% initial runs of your .tex file to
% produce the bibliography for the citations in your paper.
% You must have a proper ".bib" file
% and remember to run:
% latex bibtex latex latex
% to resolve all references

\bibliographystyle{abbrv}
\bibliography{sigproc}  % set the name of the Bibliography in this case

\begin{thebibliography}{10}

\bibitem{400G}
{400 Gb/s Ethernet Study Group}.
\newblock \url{http://www.ieee802.org/3/400GSG/index.html}.

\bibitem{DCB}
{Data Center Bridging Task Group}.
\newblock \url{http://www.ieee802.org/1/pages/dcbridges.html}.

\bibitem{802.1Qau}
{IEEE 802.1Qau: End-to-end congestion management, Working Draft}.
\newblock \url{http://www.ieee802.org/1/pages/802.1au.html}.

\bibitem{802.1Qbb}
{IEEE 802.1Qbb: Priority-based Flow Control, Working Draft}.
\newblock \url{http://www.ieee802.org/1/pages/802.1bb.html}.

\bibitem{100G}
{IEEE P802.3ba 40Gb/s and 100Gb/s Ethernet Task Force}.
\newblock \url{http://www.ieee802.org/3/ba/index.html}.

\bibitem{netfpga}
{The NetFpga Projects}.
\newblock \url{http://netfpga.org/project_table.html}.

\bibitem{iWARP}
{RoCE vs. iWARP Competitive Analysis Brief, White Paper, Mellanox}.
\newblock \url{http://www.mellanox.com/pdf/whitepapers/WP_RoCE_vs_iWARP.pdf},
  2010.

\bibitem{extreme}
{Extreme joins Cisco, Brocade, Huawei at 100G, News}.
\newblock
  \url{http://www.networkworld.com/news/2012/111312-extreme-blackdiamond-26421%
2.html}, 2012.

\bibitem{DCTCP}
M.~Alizadeh, A.~Greenberg, D.~A. Maltz, J.~Padhye, P.~Patel, B.~Prabhakar,
  S.~Sengupta, and M.~Sridharan.
\newblock {Data Center TCP (DCTCP)}.
\newblock {\em SIGCOMM}, 2010.

\bibitem{average}
M.~Alizadeh, A.~Kabbani, B.~Atikoglu, and B.~Prabhakar.
\newblock {Stability Analysis of QCN: The Averaging Principle}.
\newblock {\em Sigmetrics}, 2011.

\bibitem{HULL}
M.~Alizadeh, A.~Kabbani, T.~Edsall, B.~Prabhakar, A.~Vahdat, and M.~Yasuda.
\newblock {Less is More: Trading a little Bandwidth for Ultra-Low Latency in
  the Data Center}.
\newblock {\em NSDI}, 2012.

\bibitem{sim2}
B.~Atikoglu, A.~Kabbani, R.~Pan, and B.~Prabhakar.
\newblock {QCN: First Benchmark Simulation}.
\newblock
  \url{http://www.ieee802.org/1/files/public/docs2007/au-sim-rong-qcn-hai-simu%
1-0711.pdf}, Nov. 2007.

\bibitem{roadmap}
A.~F. Benner, P.~K. Pepeljugoski, and R.~J. Recio.
\newblock {A ROADMAP TO 100G ETHERNET AT THE ENTERPRISE DATA CENTER}.
\newblock {\em IEEE Applications and Practice}, 2007.

\bibitem{AIMD}
D.-M. CHIU and R.~JAIN.
\newblock {Analysis of the Increase and Decrease, Algorithms for Congestion
  Avoidance in Computer Networks}.
\newblock {\em Computer Networks and ISDN Systems}, 1989.

\bibitem{RCP}
N.~Dukkipati.
\newblock {Rate Control Protocol (RCP): Congestion control to make flows
  complete quickly}.
\newblock {\em Ph.D. Thesis, Department of Electrical Engineering, Stanford
  University}, 2007.

\bibitem{HS-TCP}
S.~Floyd.
\newblock {HighSpeed TCP for Large Congestion Windows}.
\newblock {\em RFC 3649}, 2003.

\bibitem{red}
S.~Floyd and V.~Jacobson.
\newblock {Random Early Detection Gateway for Congestion Avoidance}.
\newblock {\em IEEE/ACM Trans. on Networking}, 1(4), 1993.

\bibitem{FM6000}
FULCRUM.
\newblock {FocalPoint FM6000 Series, Product Brief}.
\newblock
  \url{http://www.fulcrummicro.com/product_library/FM6000_Product_Brief.pdf}.

\bibitem{PI}
C.~Hollot, V.~MISRA, D.~TOWSLEY, and W.~GONG.
\newblock {On Designing Improved Controllers for AQM Routers Supporting TCP
  Flows}.
\newblock {\em INFOCOM}, 2001.

\bibitem{commercial}
{IEEE 802.3 Working Group}.
\newblock {100GbE Electrical Backplane / Cu Cabling 100GbE Electrical Backplane
  / Cu Cabling Call-For-Interest}.
\newblock \url{http://www.ieee802.org/3/100GCU/public/nov10/CFI_01_1110.pdf}.

\bibitem{Itkis}
U.~Itkis.
\newblock {\em {Control Systems of Variable Structure}}.
\newblock Keter Publishing House Jerusalem Ltd., 1976.

\bibitem{SMCC}
W.~Jiang, F.~Ren, Ranshu, and C.~Lin.
\newblock {Sliding Mode Congestion Control for Data Center Ethernet Networks}.
\newblock {\em INFOCOM}, 2012.

\bibitem{QFX}
Juniper.
\newblock {QFX3500 SWITCH, Datasheet}.
\newblock
  \url{http://www.juniper.net/us/en/local/pdf/datasheets/1000361-en.pdf}.

\bibitem{qcn-netfpga}
A.~Kabbani and M.~Yasuda.
\newblock {Data Center Quantized Congestion Notification (QCN): Implementation
  and Evaluation on NetFPGA}.
\newblock {\em 1st Asia NetFPGA Developers Workshop}, 2010.

\bibitem{XCP}
D.~Katabi, M.~Handley, and C.~Rohrs.
\newblock {Congestion Control for High Bandwidth Delay Product Networks}.
\newblock {\em SIGCOMM}, 2002.

\bibitem{S-TCP}
T.~Kelly.
\newblock {Scalable TCP: Improving Performance in Highspeed Wide Area
  Networks}.
\newblock {\em SIGCOMM Computer Communication Review}, 2003.

\bibitem{illinois}
S.~Liu, T.~Basar, and R.~Srikant.
\newblock {TCP-Illinois: A Loss and Delay-Based Congestion Control Algorithm
  for High-Speed Networks}.
\newblock {\em ValueTools}, 2006.

\bibitem{stability}
A.~Lyapunov.
\newblock {\em Stability of Motion}.
\newblock Academic Press, 1966.

\bibitem{iSCSI}
K.~Meth, C.~Sapuntzakis, M.~Chadalapaka, and E.~Zeidner.
\newblock {Internet Small Computer Systems Interface (iSCSI)}.
\newblock {\em RFC3720}, 2004.

\bibitem{ESnet5}
J.~Metzger.
\newblock {ANI \& ESnet5 Deployment}.
\newblock
  \url{http://www.internet2.edu/presentations/tip2013/20130116-metzger-ESnet5-%
Lessons-Learned.pdf}, 2012.

\bibitem{code23}
R.~Pan.
\newblock {QCN Pseudo Code Version 2.3}.
\newblock
  \url{http://www.ieee802.org/1/files/public/docs2009/au-rong-qcn-serial-hai-v%
23.pdf}.

\bibitem{SMVS}
F.~Ren, C.~Lin, X.~Yin, and X.M.Shan.
\newblock {A Robust Active Queue Management Algorithm Based on Sliding Mode
  Variable Structure Control}.
\newblock {\em INFOCOM}, 2002.

\bibitem{Metrics}
E.~S.~Floyd.
\newblock {Metrics for the Evaluation of Congestion Control Mechanisms}.
\newblock {\em RFC5166}, 2008.

\bibitem{BIC}
L.~Xu, K.~Harfoush, and I.~Rhee.
\newblock {Binary Increase Congestion Control (BIC) for Fast Long-Distance
  Networks}.
\newblock {\em INFOCOM}, 2004.

\bibitem{PengYan}
P.~Yan, Y.~Gao, and H.~Ozbay.
\newblock {A Variable Structure Control Approach to Active Queue Management for
  TCP With ECN}.
\newblock {\em IEEE Trans. on CONTROL SYSTEMS TECHNOLOGY}, 2002.

\bibitem{GAIMD}
Y.~R. Yang and S.~S. Lam.
\newblock {General AIMD Congestion Control}.
\newblock {\em ICNP}, 2000.

\end{thebibliography}

% ACM needs 'a single self-contained file'!

%APPENDICES are optional
%\balancecolumns

\appendix
\section{Impacts of Delay on QCN}
\newtheorem{theorem}{Proposition}
\begin{theorem}
When the delay $\tau$ satisfies 
\begin{equation}
\tau\geq \frac{1}{\bar{\omega}}[\arctan(\frac{\omega^*}{b})+\arctan(\frac{\omega^*}{\gamma})-\arctan(\frac{\bar{\omega}}{\beta}-\frac{\alpha}{\beta\bar{\omega}})]
\end{equation}
where $\bar{\omega}\triangleq \sqrt{\frac{a_3^2+2\alpha-\beta^2+\sqrt{(a_3^2+2\alpha-\beta^2)^2+4(a_3^2\gamma^2-\alpha^2)}}{2}}$, 
the QCN system is unstable
\label{Proposition1}
\end{theorem}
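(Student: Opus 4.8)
The plan is to run the standard frequency-domain instability argument for linear time-delay systems on a fluid model of QCN, mirroring the stability analysis of \cite{average} but reading off a critical delay from the unstable side. First I would write the averaged fluid model of the QCN loop about its operating point $(q_0,C)$: the queue integrator $\dot q(t)=A(t)-C$ from Eq.~(\ref{queue}), the feedback signal $F_b$ of Eqs.~(\ref{feedback})--(\ref{deltaq}) evaluated on the delayed state $q(t-\tau)$, and the averaged effect on the target rate and the sending rate of the multiplicative decrease~(\ref{decrease}) together with the binary-search recovery~(\ref{fastrecovery}). Linearizing about the equilibrium yields a linear delay-differential equation whose characteristic quasi-polynomial has the form $P(s)+Q(s)e^{-s\tau}=0$, with $P$ and $Q$ low-degree polynomials whose coefficients are assembled from $p$, $w$, $N$, $G_d$, $C$ and the target rate; these are precisely the constants $a_3,\alpha,\beta,\gamma,b$ (and the auxiliary frequency $\omega^{*}$) appearing in Proposition~\ref{Proposition1}. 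Since QCN at $\tau=0$ is stable --- all roots of $P(s)+Q(s)=0$ lie in the open left half-plane, as established in \cite{average,qcn-netfpga} --- any loss of stability as $\tau$ grows must occur through a pair of roots crossing the imaginary axis, so it suffices to locate the first such crossing and then argue that beyond it a root stays in the right half-plane.

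Next I would substitute $s=j\omega$ and separate the modulus and argument conditions. Because $|e^{-j\omega\tau}|=1$, the modulus condition $|P(j\omega)|=|Q(j\omega)|$ does not involve $\tau$; expanding it gives a biquadratic in $\omega$, and taking its unique positive root (the other root for $\omega^{2}$ is negative in the relevant parameter regime, since the product of the two roots equals $\alpha^{2}-a_3^{2}\gamma^{2}<0$ there) reproduces exactly the $\bar\omega$ defined in the statement. Feeding $\omega=\bar\omega$ into the argument condition $\arg Q(j\bar\omega)-\arg P(j\bar\omega)-\bar\omega\,\tau\equiv-\pi\pmod{2\pi}$ and writing each side as the sum of the phases of its elementary factors --- an $\arctan(\omega^{*}/b)$ and an $\arctan(\omega^{*}/\gamma)$ from the first-order factors and an $\arctan(\bar\omega/\beta-\alpha/(\beta\bar\omega))$ from the quadratic factor $s^{2}+\beta s+\alpha$ --- and solving for the smallest positive $\tau$ gives exactly the right-hand side of the inequality in Proposition~\ref{Proposition1}: the delay $\tau_c$ at which a characteristic root first reaches the imaginary axis.

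Finally I would close the argument with continuity of the characteristic roots in $\tau$ together with a transversality check. At $\tau=0$ every root lies in the open left half-plane; a root can enter the right half-plane only by crossing $s=j\omega$ for some real $\omega$, and the modulus condition admits no crossing frequency other than $\bar\omega$, so the first crossing occurs at $\tau=\tau_c$. Verifying that $\mathrm{Re}\,\frac{ds}{d\tau}\big|_{s=j\bar\omega}>0$ at the crossing shows the root actually moves into --- and, there being no return crossing, remains in --- the open right half-plane, so for every $\tau\geq\tau_c$ the QCN system carries an exponentially growing mode, which is the assertion.

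I expect the main obstacle to be twofold. One part is model-derivation bookkeeping: pushing the \emph{averaged} influence of QCN's multiplicative decrease and binary-search increase through the linearization so that $P$ and $Q$ come out with exactly the coefficient structure that collapses the modulus equation into the clean biquadratic whose root is $\bar\omega$, which is where $a_3,\alpha,\beta,\gamma,b,\omega^{*}$ must be matched to the quantities of \cite{average}. The other, and genuinely delicate, part is justifying that $\bar\omega$ is the \emph{only} positive crossing frequency and that the crossing is transversal: a transcendental characteristic equation can in principle cross the imaginary axis at several frequencies and in either direction, so one must rule this out in the operating regime of interest in order to upgrade the ``on the stability boundary'' computation into a genuine \emph{instability} statement valid for all delays above the threshold.
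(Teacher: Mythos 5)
Your overall frequency-domain strategy (linearized delayed characteristic equation, gain and phase conditions on the imaginary axis) is the same family of argument the paper uses, but the way you propose to extract $\bar\omega$ and the delay threshold would not produce the stated inequality, and this points to the idea you are missing. With $G(s)=e^{\tau s}\,a_3(s+b)(s+\gamma)/[s(s^2+\beta s+\alpha)]$ imported from \cite{average}, the exact gain-crossover condition $|G(j\omega_c)|=1$ reads $a_3^2(\omega_c^2+b^2)(\omega_c^2+\gamma^2)=\omega_c^2[(\omega_c^2-\alpha)^2+\beta^2\omega_c^2]$, which is a cubic in $\omega_c^2$, not the biquadratic you describe, and $\bar\omega$ is \emph{not} its root. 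The paper never solves for $\omega_c$. Instead it sandwiches it: dividing by $\omega_c^2$ and discarding the $b^2/\omega_c^2$ term yields the inequality $(\omega_c^2-\alpha)^2+\beta^2\omega_c^2> a_3^2(\omega_c^2+\gamma^2)$, whose solution is the one-sided bound $\omega_c\geq\bar\omega$, while a separate computation shows $r(\omega^*)<1$ and hence $\omega_c<\omega^*$. The stated threshold is then a \emph{conservative sufficient condition}, not an exact critical delay: each term of the phase $\theta(\omega_c)=\pi+\omega_c\tau-\arctan(\omega_c/b)-\arctan(\omega_c/\gamma)+\arctan(\omega_c/\beta-\alpha/(\beta\omega_c))$ is bounded at whichever endpoint ($\omega^*$ or $\bar\omega$) makes it worst-case, which is exactly why the formula mixes $\arctan(\omega^*/b)$ and $\arctan(\omega^*/\gamma)$ with $\arctan(\bar\omega/\beta-\alpha/(\beta\bar\omega))$ and the prefactor $1/\bar\omega$. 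Your plan --- exact crossing frequency plus exact argument condition --- would yield a formula with all arctangents evaluated at one and the same frequency, i.e.\ a different proposition.

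Two further points. First, your instinct that one must establish uniqueness of the crossing frequency and transversality to upgrade a boundary computation into an instability statement is sound, and your exact-crossing route would indeed force you to do that work; the paper sidesteps it by arguing directly that the phase lag at the gain-crossover frequency is at least $\pi$ and invoking the Bode stability criterion, so no root-migration argument appears. Second, you do not need to re-derive the fluid model: the transfer function and all the constants $a_1,\dots,a_4,\alpha,\beta,\gamma,b$ are taken verbatim from \cite{average}, so the ``model-derivation bookkeeping'' you flag as a main obstacle is not actually part of this proof.
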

\begin{proof}
As shown in $\cite{average}$, the characteristic equation of the differential equations describing QCN system is $1+G(s)=0$ where
\begin{equation}
G(s)=e^{\tau s}\frac{a_3(s+b)(s+\gamma)}{s(s^2+\beta s+\alpha)}
\label{translate}
\end{equation}
and $b=p_sR_c$, $\alpha=b(a_1-a_2)$, $\beta=b+a_1$, $\gamma=Na_4/a_3$, $a_1=\frac{\eta(p_s)Rc}{2}+\frac{\eta(p_s)\zeta(p_s)R_{AI}}{2p_s}$, $a_2=\frac{\eta(p_s)R_c}{2}$, $a_3=G_dwR_c$, $a_4=p_sR_cG_d$, $R_c=\frac{C}{N}$, $\eta(p_s)=\frac{p_s}{(1-p_s)^{-100}-1}$ and $\zeta(p_s)=\eta(p_s)(1-p_s)^{500}$. Parameter $C$, $N$, $p_s$ and $\tau$ stand for the link capacity, the number of senders, the sampling probability and the delay, respectively. Define $r(\omega)=|G(j\omega)|$, $\theta(\omega)=-\angle G(j\omega)$, and $\omega^*=\sqrt{a_3^2/2+\sqrt{a_4^3/4+\gamma^2a_3^2}}$, there is $r(\omega^*)<1$. Let $\omega_c$ represents the 0-dB crossover frequency, there is $0<\omega_c<w^*$ and $r(\omega_c)=1$, which is equivalent to 
\begin{equation}
a_3^2(\omega_c^2+b^2)(\omega_c^2+\gamma^2)=\omega_c^2[(\omega_c^2-\alpha)^2+\omega_c^2\beta^2]
\end{equation}
Thus, there is
\begin{equation}
(\omega_c^2-\alpha)^2+\omega_c^2\beta^2>a_3^2(\omega_c^2+\gamma^2)
\label{crossover}
\end{equation}
Solving inequality $(\ref{crossover})$, there is
\begin{equation}
\omega_c\geq \bar{\omega} 
\label{necessary}
\end{equation}
and accordingly 
\begin{equation}
\begin{array}{l}
\theta(\omega_c)=\pi+\omega_c\tau-\arctan(\frac{\omega_c}{b})-\arctan(\frac{\omega_c}{\gamma}) \\
\quad\quad\quad  +\arctan(\frac{\omega_c}{\beta}-\frac{\alpha}{\beta\omega_c}) \\
\qquad\quad\!\!\!\geq  \pi+\arctan(\frac{\omega^*}{b})-\arctan(\frac{\omega_c}{b}) \\
\quad\quad\quad  +\arctan(\frac{\omega^*}{\gamma})+ \arctan(\frac{\omega_c}{\beta}-\frac{\alpha}{\beta\omega_c})\\
\quad\quad\quad   -\arctan(\frac{\omega_c}{\gamma})-\arctan(\frac{\bar{\omega}}{\beta} -\frac{\alpha}{\beta\bar{\omega}}) \\
\qquad\quad\!\!\!\geq \pi
\end{array}
\end{equation}
Therefore, according to the Bode stability criterion~\cite{stability}, QCN is unstable.
\end{proof}
With the recommended parameters in~\cite{average}, the lower boundary of the delay in Eq. $(\ref{necessary})$ is approximately $271\mu s$ for $10Gbps$ Ethernet link, and $27\mu s$ for $100Gbps$ Ethernet link. In other words, delay large than $27\mu s$ would result in unstable QCN system when the speed of Ethernet increases to $100Gbps$. Therefore, the delay should be taken more seriously in designing CN scheme.

\section{Fluid-Flow Model of ASM}
The rate adjustment algorithm $(\ref{adjustment})$ of ASM can be modeled by differential equation
\begin{equation}
\frac{dr(t)}{dt}=-\alpha Q_f-\beta \Delta Q
\label{rate}
\end{equation}
Focus on the bottleneck link and assume that all $N$ sources are identical, i.e., $A(t)=Nr(t)$ and
\begin{equation}
\frac{dq(t)}{dt}=Nr(t)-C
\label{queue-n}
\end{equation}
For the simplification of expression, we make linear transformation
\begin{equation}
\left \{
\begin{array}{l}
x_1(t)\triangleq Q_f=q(t)-q_0 \\
x_2(t)\triangleq \frac{\Delta Q}{pC}=Nr(t)-C
\end{array}
\right.
\label{transform}
\end{equation}
Combining Eq. (\ref{rate}), (\ref{queue-n}), (\ref{transform}) and (\ref{deltaq}), we can obtain the fluid flow model of the ASM system 
\begin{equation}
\left \{
\begin{array}{l}
\frac{dx_1(t)}{dt}=x_2(t)\\
\frac{dx_2(t)}{dt}=-N\alpha x_1(t)-\frac{N\beta}{pC} x_2(t)
\end{array}
\right.
\label{fluid-flow}
\end{equation}
Obviously, Eq. $(\ref{fluid-flow})$ can also be written as the form of Eq. $(\ref{fluid-flow-new})$.

\section{Trajectories of ASM}
Referring to Eq. $(\ref{feedback})$ and (\ref{deltaq}), there are 
\begin{equation}
F_b(t)=-x_1-\frac{w}{pC}x_2
\end{equation}
And referring to Eq. $(\ref{feedback})$ and (\ref{fluid-flow}), there are
\begin{equation}
\begin{array}{ll}
\frac{dF_b(t)}{dt} & =-\frac{dx_1}{dt}-\frac{w}{pC}\frac{dx_2}{dt} \\
                   & =-\frac{dx_1}{dt}+\frac{wN\alpha}{pC}x_1+\frac{wN\beta}{p^2C^2}x_2 \\
                   & =-x_2-\frac{w^2N\alpha}{p^2C^2}x_2+\frac{wN\beta}{p^2C^2}x_2 \\
\end{array}
\label{feedback-dirative}
\end{equation}
when $F_b(t)\rightarrow0$. Substituting Eq. $(\ref{coefficients})$ into Eq. (\ref{feedback-dirative}), we know that $\lim\limits_{F_b(t)\rightarrow 0}^{}F_b(t)\frac{dF_b(t)}{dt}\leq 0$ is equivalent to Eq. $(\ref{sliding})$.

On the other hand, the roots of the characteristic equation of differential equations $(\ref{fluid-flow-new})$ is
\begin{equation}
\lambda_{1,2}=-\frac{N\beta}{2pC}\pm \sqrt{\frac{N^2\beta^2}{4p^2C^2}-\alpha}
\end{equation}
Referring to inequality $(\ref{sliding})$ and~\cite{Itkis}, we can know that $\lambda_{1,2}$ are complex with negative real parts when $Q_fF_b>0$ and the trajectory of ASM is spiral in regions \Rmnum{1}, \Rmnum{2}, \Rmnum{3} and \Rmnum{4}, as the red line shown in $\figurename{\ref{asm-trajectory}}$. When $Q_fF_b<0$, $\lambda_{1,2}$ are two distinct negative real number and the trajectory of ASM is parabola with asymptotic line $\sigma$ and $\Omega$, which denote lines 
\begin{equation}
\sigma=\frac{dx_1}{dt}+\lambda_1x_1=x_2+\lambda_1x_1
\end{equation}
and
\begin{equation}
\Omega=\frac{dx_1}{dt}+\lambda_2x_1=x_2+\lambda_2x_1
\end{equation}
respectively, as the blue line shown in $\figurename{\ref{asm-trajectory}}$.

\section{Impacts of Delay on ASM}
\begin{theorem}
The impacts of the delay are equivalent to the impacts of the combination of a drifted parameters, a drifted boundary line $F_b(t)=0$ and an external disturbance $\bar{E_1}$ on the rate adjustment result, all of which are tolerable by ASM.
\label{Proposition2}
\end{theorem}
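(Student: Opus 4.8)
\noindent\textit{Proof plan.} The plan is to start from the delayed fluid‑flow model of ASM, rewrite it term by term as the ideal model (\ref{fluid-flow}) acted on by three small perturbations, and then invoke the robustness of sliding mode motion from \cite{Itkis}. Concretely, with delay $\tau$ the reaction point uses the feedback produced $\tau$ seconds earlier, so (\ref{rate}) becomes $\dot r(t)=-\alpha\,Q_f(t-\tau)-\beta\,\Delta Q(t-\tau)$, and the fact that the recomputed rate takes effect only after a further lag is modeled as an extra error added to the right‑hand side; together with (\ref{queue-n}) this is the system to be compared against the ideal one.

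First I would handle the congestion‑detection part. The rate rule in force is selected by the sign of $Q_fF_b$, but only $Q_f(t-\tau)F_b(t-\tau)$ is observable. Expanding $F_b(t-\tau)=F_b(t)-\tau\dot F_b(t)+O(\tau^2)$ and substituting the expression for $\dot F_b$ near $F_b=0$ from Appendix~C (equation (\ref{feedback-dirative}), which is a constant multiple of $x_2$), the zero set of the delayed switching function is, to first order, the line $F_b(t)=\mathrm{const}\cdot\tau\,x_2(t)$ through the origin, i.e.\ the line $-x_1-\frac{w}{pC}x_2=0$ with its slope shifted by an $O(\tau)$ amount — exactly the drifted boundary line $\Lambda$ claimed.

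Next, for rate adjustment, I would Taylor‑expand the delayed feedback as $Q_f(t-\tau)=Q_f(t)-\tau x_2(t)+R_1$ and $\Delta Q(t-\tau)=\Delta Q(t)-\tau\,\dot{(\Delta Q)}(t)+R_2$ with integral remainders $R_i=O(\tau^2)$, and use (\ref{fluid-flow}) to rewrite $\dot x_2=-N\alpha x_1-\frac{N\beta}{pC}x_2$ — hence $\dot{(\Delta Q)}$ — back in terms of $Q_f$ and $\Delta Q$. Collecting the terms proportional to $Q_f$ and to $\Delta Q$ produces effective coefficients $\alpha_\tau=\alpha+O(\tau)$ and $\beta_\tau=\beta+O(\tau)$, which is the asserted parameter‑drift formula; the leftover $R_1,R_2$ together with the actuation‑lag error are then lumped into a single additive disturbance $\bar{E_1}(t)$ on the rate‑adjustment result. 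Using identity (\ref{E_1}), $q(t)-q(t-\tau)=\int_{t-\tau}^t[A(v)-C]\,dv$, and its analogue for $\Delta Q$, every piece of $\bar{E_1}$ is bounded by a constant times $\tau\sup_{[t-\tau,t]}|A(v)-C|$, which, once an a priori bound $|A-C|\le C_{\max}$ is in hand, yields the asserted bound on $\bar{E_1}$.

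Finally I would argue each perturbation is harmless. The drifted $\alpha_\tau,\beta_\tau$ still satisfy inequalities (\ref{sliding}) for $\tau$ below a threshold (or after compensating the parameters, as the adaptive parameter‑setting guidelines allow), so by continuity of the characteristic roots $\lambda_{1,2}$ in $(\alpha,\beta)$ the spiral/parabola motion pattern of Appendix~C persists and the trajectory still reaches the drifted boundary line; moreover (\ref{smm-queue}) contains no $\alpha,\beta$ at all, so the sliding dynamics survive up to the $O(\tau)$ slope change, which merely widens $\Lambda$ into a quasi‑sliding band in the sense of \cite{Itkis} (Figure~\ref{smm}). The bounded disturbance $\bar{E_1}$ is absorbed by the classical robustness of sliding mode motion: the trajectory still enters the quasi‑sliding band and then chatters in a bounded neighbourhood $H_0$ of $(q_0,C)$ whose size the parameters control. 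The hardest part is closing the circular dependence in the disturbance step — the Taylor expansions and the bound on $\bar{E_1}$ both presuppose an a priori bound on $|A(t)-C|$, which is itself part of what ``tolerable'' means; I would close that loop with a trapping‑region argument, showing the perturbed approaching‑process trajectories stay in a bounded set and funnel into the quasi‑sliding band where $|A-C|$ is then automatically bounded, rather than a raw Gronwall estimate, leaning on the explicit global geometry of the unperturbed trajectories established in Appendix~C.
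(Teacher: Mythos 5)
Your overall strategy --- split the delay's effect into a drifted switching line, drifted rate-adjustment coefficients, and a residual additive disturbance, then invoke the quasi-sliding-mode robustness of \cite{Itkis} --- is exactly the paper's strategy, but the technical execution differs in two ways that matter. First, the paper does not Taylor-expand in $\tau$: it works with the exact differences $\epsilon_i(t)=x_i(t-\tau)-x_i(t)$ and splits the delayed model (\ref{adjust-delay}) so that \emph{only} the switched (absolute-value) part of the coefficients absorbs the delay, via the quantity $\varphi$ in (\ref{drift}); the \emph{linear} delayed terms $-A\epsilon_1-KB\epsilon_2$ are lumped wholesale into the disturbance $E_1$ in (\ref{error}). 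Your Taylor-based collection would instead push the first-order corrections into the coefficient drift and only the $O(\tau^2)$ remainders into the disturbance --- a legitimate alternative decomposition, but it does not reproduce the exact drift formulas (\ref{para-drift})--(\ref{amplitude}) and the $H_0$ width (\ref{H0}) that the main text cites. Second, and more importantly, the circular dependence you flag as the hardest step simply does not arise in the paper's formulation: every error is bounded \emph{relative to the current state}, $|\epsilon_i(t)|\leq\nu_i L(t)$ with $L(t)=|x_1(t)|+|x_2(t)|$, obtained from the mean-value theorem plus the Gronwall-type growth estimate (\ref{dRt}) over the short interval $[t-\tau,t]$ (see (\ref{jwc1})). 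Because $E_1<(A\nu_1+KB\nu_2)L(t)$ vanishes at the stable point, no a priori global bound on $|A(t)-C|$ is needed and no trapping-region argument is required; the sliding-mode existence condition (\ref{sufficient-sliding}) then directly yields the residual region $H_0$. If you carry out your version, you would still need to close your loop, and replacing your absolute bound $|A-C|\leq C_{\max}$ with a state-proportional bound in the paper's style is the cleanest way to do it.
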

\begin{proof}
The proof of this proposition is composed by three parts. The first part is preliminary. The second part is mainly the changes of the fluid flow model of ASM, which deconstructs the impacts of delay. Additional result of the second part is the explanation why the impacts of drifted parameters and drifted boundary line are tolerable by ASM. The third part explain why the external disturbance $\bar{E_1}$ on the rate adjustment result is tolerable in ASM.

\textbf{Part I}
Assume the delay is $\tau$ and define $\epsilon_1=x_1(t-\tau)-x_1(t)$, $\epsilon_2=x_2(t-\tau)-x_2(t)$, $K=\frac{1}{pC}$, $A=N\frac{a^++a^-}{2}$, $K_A=N\frac{a^+-a^-}{2}$, $B=N\frac{b^++b^-}{2}$, $K_B=N\frac{b^+-b^-}{2}$, $K_\Gamma=\sqrt{K_A^2+K_B^2}$, $\Gamma=\sqrt{A^2+B^2}$ and $R(t)=\sqrt{x_1^2(t)+x_2^2(t)}$. There is
\begin{equation}
R(t)\leq |x_1(t)|+|x_2(t)|\triangleq L(t)
\label{Rt}
\end{equation}
and
\begin{equation}
\begin{array}{l}
|\frac{dR(t)}{dt}| %= \frac{1}{R(t)}|x_1x_2-x_2[A_0x_1+B_0x_2 \\
%\qquad\qquad +K_A|x_1|signF_b(t)]+K_B|x_2|signF_b(t)]| \\\qquad\quad
\leq \frac{1}{R(t)}[R^2(t)+\Gamma R^2(t)+K_\Gamma R^2(t)] \\
\qquad\quad      =(1+\Gamma+K_\Gamma)R(t)
\end{array}
\label{dRt}
\end{equation}
From inequalities $(\ref{Rt})$ and $(\ref{dRt})$, we have
\begin{equation}
\sup\limits_{t\in[t_0,t_0-\tau)}^{}R(t)\leq e^{(1+\Gamma+K_\Gamma)\tau}R(t_0)
\end{equation}
Therefore, the boundary of $\epsilon_1$ can be deduced as follows.
\begin{equation}
\begin{array}{l}
|\epsilon_1(t)| %=|x_1(t-\tau )-x_1(t)| \\\qquad\ \
=\tau x_2(t-\theta\tau ) \\
\qquad\ \ \leq \tau R(t-\theta\tau ) \\
\qquad\ \ \leq \tau e^{(1+\Gamma+K_\Gamma)\tau }R(t)\\
\qquad\ \ \leq \nu_1L(t)
\end{array}
\label{jwc1}
\end{equation}
where $0<\theta<1$ and $\nu_1\triangleq\tau e^{(1+\Gamma+K_\Gamma)\tau }$. With the same method, we can obtain $|\epsilon_2(t)|\leq \nu_2L(t)$, where $\nu_2\triangleq\tau (\Gamma+K_\Gamma)e^{(1+\Gamma+K_\Gamma)\tau }$.

\textbf{Part II}
The ideal fluid flow model of ASM $(\ref{fluid-flow})$ can be rewritten as
\begin{equation}
\left \{ \begin{array}{l}
    \frac{dx_1(t)}{dt}=x_2(t) \\
    \frac{dx_2(t)}{dt}=-A x_1-KB x_2 \\
\quad\quad\quad\ \  -[K_A|x_1|+KK_B|x_2|]sign\{F_b(t)\}
\end{array}
    \right.
\label{adjust-basic}
\end{equation}
where $sign\{F_b(t)\}$ denotes the sign of $F_b(t)$. When the delay $\tau$ is considered, the boundary line of ASM becomes
\begin{equation}
\bar{F_b}(t)=-[x_1(t)+\frac{w}{pC}x_2(t)+\epsilon_1+w\epsilon_2]=0%\frac{wP_s}{pC}x_2(t)+\epsilon_1+\frac{wP_s}{pC}\epsilon_2=0
\label{switching}
\end{equation}
and the fluid flow model of ASM becomes
\begin{equation}
\left \{
\begin{array}{l}
\frac{dx_1(t)}{dt}=x_2 \\
\frac{dx_2(t)}{dt}=-A(x_1+\epsilon_1)-KB(x_2+\epsilon_2)\\  %-\frac{P_s}{pC}[B(x_2+\epsilon_2)\\
\quad\ \ -(K_A|x_1+\epsilon_1|-KK_B |x_2+\epsilon_2|)sign\{\bar{F_b(t)}\}
\end{array}
\right.
\label{adjust-delay}
\end{equation}
Define
\begin{equation}
\left \{
\begin{array}{l}
\varphi=[K_A(|x_1+\epsilon_1|-|x_1|)+ \\
\quad\ \quad  KK_B(|x_2+\epsilon_2|-|x_2|)]sign\{\bar{F_b(t)}\} \\ %\frac{K_B P_s}{pC}(|x_2+\epsilon_2|-|x_2|)]sign\{\bar{F_b(t)}\} \\
\bar{A}=A+\frac{\varphi(t)}{L(t)}sign\{x_1\} \\
\bar{B}=KB+\frac{\varphi(t)}{L(t)}sign\{x_2\}    %\frac{B P_s}{pC}+\frac{\varphi(t)}{L(t)}sign\{x_2\}
\end{array}
\right.
\label{drift}
\end{equation}
there is
\begin{equation}
\bar{A}x_1+\bar{B}x_2=A x_1+KB x_2+\varphi    %\frac{B P_s}{pC} x_2+\varphi
\end{equation}
Accordingly, the fluid flow model $(\ref{adjust-delay})$ becomes
\begin{equation}
\left \{
\begin{array}{l}
\frac{dx_1(t)}{dt}=x_2 \\
\frac{dx_2(t)}{dt}=-\bar{A}x_1-\bar{B}x_2 -A\epsilon_1 -KB\epsilon_2\\
\quad\quad\quad\ \  -[K_A|x_1|+KK_B |x_2|]sign\{\bar{F_b(t)}\}
\end{array}
\right.
\label{algorithm1}
\end{equation}
Define $E_1=A\epsilon_1 +KB\epsilon_2$, there are
\begin{equation}
E_1<(A\mu_1+KB\mu_2)L(t)
\label{error}
\end{equation}
Comparing Eq. (\ref{adjust-basic}) with Eq. (\ref{algorithm1}), we can know that the impacts of the delay is equivalent to the impacts of both disturbance $E_1$ and that parameters $A$, $B$ and $F_b(t)$ are changed to be $\bar{A}$, $\bar{B}$ and $\bar{F_b(t)}$, respectively. As a step further, the change from both $A$ and $B$ to both $\bar{A}$ and $\bar{B}$ is equivalent to that parameters $a^+, b^+, a^-$ and $b^-$ are changed to $\bar{a}^+, \bar{b}^+, \bar{a}^-$ and $\bar{b}^-$ respectively, where
\begin{equation}
\begin{array}{ll}
\bar{a}^+=a^++\frac{\varphi(t)}{L(t)}sign\{x_1\} &
\bar{b}^+=b^++\frac{\varphi(t)}{L(t)}sign\{x_2\} \\
\bar{a}^-=a^-+\frac{\varphi(t)}{L(t)}sign\{x_1\} &
\bar{b}^-=b^-+\frac{\varphi(t)}{L(t)}sign\{x_2\}
\end{array}
\label{para-drift}
\end{equation}
And the largest amplitude of parameters drift is 
\begin{equation}
\frac{|\varphi(t)|}{L(t)}\leq \frac{1}{L(t)}(K_A\epsilon_1+KK_B\epsilon_2)\leq K_A\nu_1+KK_B\nu_2
%\quad\quad\ =K_A\tau (1+\Gamma+K_\Gamma)e^{(1+\Gamma+K_\Gamma)\tau }
\label{amplitude}
\end{equation}
According to~\cite{Itkis}, the drifted boundary line changes the sliding mode motion of ASM to be quasi-sliding mode motion, which has the same characteristic as the ideal sliding mode motion. Namely, ASM are still insensitive to the parameters drift of rate adjustment rules and can slide to the stable point in the quasi-sliding region. Moreover, the parameters drift can also be compensated through proper parameters setting.

\textbf{Part III}
When $E_1$ is taken into consideration, the fluid-flow model of the ASM system becomes Eq. $(\ref{algorithm1})$ and the existence of the sliding mode motion is guaranteed by inequality $\lim\limits_{F_b(t)\rightarrow 0}^{}F_b(t)\frac{dF_b(t)}{dt}\leq 0$, which is equivalent to
\begin{equation}
\left \{ \begin{array}{ll}
|x_1|\geq \frac{w^2|E_1|}{|w^2Na^--wNb^-+p^2C^2|} \\
|x_1|\geq \frac{w^2|E_1|}{|w^2Na^+-wNb^++p^2C^2|}
\end{array}
\right.
\label{sufficient-sliding}
\end{equation}
Define region $H_0$ as $\{q(t):|q(t)-q_0|\leq D\}$, where the width $D$ is  
\begin{equation}
\begin{array}{l}
D\triangleq\max\{\frac{w^2|E_1|}{|w^2Na^--wNb^-+p^2C^2|}, \\
\quad\quad \frac{w^2|E_1|}{|w^2Na^+-wNb^++p^2C^2|} \}
\end{array}
\label{H0}
\end{equation}
\end{proof}

\end{document}